\newtheorem{assumption}{Assumption}
\newtheorem{problem}{Problem}
\newtheorem{proposition}{\bf Proposition}
\newtheorem{remark}{Remark}
\newtheorem{theorem}{\bf Theorem}
\newtheorem{definition}{Definition}
\newtheorem{corollary}{\bf Corollary}
\newtheorem{lemma}{\bf Lemma}
\newtheorem{example}{Example}
\newcommand{\QED}{\hfill$\blacksquare$}
\newcommand{\mycomment}[1]{% #1
}%
\newcommand{\change}[1]{{\color{black} #1}}
\DeclareMathOperator*{\argmax}{arg\,max}
\title{\LARGE \bf Learning with Delayed Payoffs in Population Games\\ using Kullback-Leibler Divergence Regularization}
\author{Shinkyu Park and Naomi Ehrich Leonard% <-this % stops a space
  % \thanks{}% <-this % stops a space
  \thanks{Park's work was supported by funding from King Abdullah University of Science and Technology (KAUST). Leonard's work was supported in part by ONR grant N00014-19-1-2556 and ARO grant W911NF-18-1-0325.}
  \thanks{Park is with Electrical and Computer Engineering, King Abdullah University of Science and Technology (KAUST), Thuwal 23955, Saudi Arabia. {\tt shinkyu.park@kaust.edu.sa}}
  \thanks{Leonard is with the Department of Mechanical and Aerospace Engineering, Princeton University, Princeton, NJ 08544, USA. {\tt naomi@princeton.edu}}
}
\begin{document}

\maketitle
% \thispagestyle{empty}
% \pagestyle{empty}

%%%%%%%%%%%%%%%%%%%%%%%%%%%%%%%%%%%%%%%%%%%%%%%%%%%%%%%%% 
\begin{abstract}
  We study a multi-agent decision problem in large population games. Agents from multiple populations select strategies for repeated interactions with one another. At each stage of these interactions, agents use their decision-making model to revise their strategy selections based on payoffs determined by an underlying game. Their goal is to learn the strategies that correspond to the Nash equilibrium of the game. However, when games are subject to time delays, conventional decision-making models from the population game literature may result in oscillations in the strategy revision process or convergence to an equilibrium other than the Nash.
  To address this problem, we propose the Kullback-Leibler Divergence Regularized Learning (KLD-RL) model, along with an algorithm that iteratively updates the model's regularization parameter across a network of communicating agents. Using passivity-based convergence analysis techniques, we show that the KLD-RL model achieves convergence to the Nash equilibrium without oscillations, even for a class of population games that are subject to time delays. We demonstrate our main results numerically on a two-population congestion game and a two-population zero-sum game.

\end{abstract}

\begin{IEEEkeywords}
  Multi-agent systems, decision making, evolutionary dynamics, nonlinear systems, game theory
\end{IEEEkeywords}

%%%%%%%%%%%%%%%%%%%%%%%%%%%%%%%%%%%%%%%%%%%%%%%%%%%%%%%%% 

\section{Introduction} \label{sec:intro}
Consider a large number of agents engaged in repeated strategic interactions. Each agent selects a strategy for these interactions and can repeatedly revise its strategy according to payoffs determined by an underlying payoff mechanism. To learn and adopt the best strategy without knowing the structure of the payoff mechanism, an agent needs to revise its strategy selection at each stage of the interactions based on the instantaneous payoffs it receives.
  This multi-agent decision problem is relevant in control engineering applications where the goal is to design decision-making models that enable multiple agents to learn effective strategies in a self-organized way. As discussed in \cite[\S III]{Park2019Payoff-Dynamic-}, applications include multi-agent task allocation, demand response in smart grids, and wireless networks, as well as distributed control systems and optimization, among others.

To formalize the problem, we adopt the \textit{population game} framework \cite[\S 2]{Sandholm2010-SANPGA-2}. In this framework, a \textit{payoff function} defines how the payoffs are determined based on the agents' strategy profile, which is the distribution of their strategy selections over a finite number of available strategies. \textit{An evolutionary dynamics model} describes how individual agents revise their strategies to increase the payoffs they receive. A key research theme involves establishing convergence of the strategy profile to the Nash equilibrium, where no agent is better off by unilaterally revising its strategy selection.\footnote{In this work, we consider that the Nash equilibrium represents a desired distribution of the agents' strategy selections, e.g., the distribution of route selections minimizing road congestion in congestion games (Example~\ref{example:congestion_game}) or minimizing opponents' maximum gain in zero-sum games (Example~\ref{example:zero_sum_game}), and investigate convergence of the agents' strategy profile to the Nash equilibrium. However, we note that such an equilibrium is not always desirable and may result in the worst outcome, for instance, in social dilemmas as illustrated by the prisoner's dilemma \cite{doi:10.1146/annurev.soc.24.1.183}. We refer to \cite{9663230} and references therein for other studies on decision model design in social dilemmas.}

  Unlike in existing studies, we investigate scenarios where the payoff mechanism is subject to time delays. This models, for example, the propagation of traffic congestion on roads in congestion games, delays in communication between electric power utilities and demand response agents in demand response games, and limitations of agents in sensing link status in network games. When agents revise their strategy selections based on delayed payoffs, the strategy profile does not converge to the Nash equilibrium, as discussed in prior work in the game theory literature \cite{ALBOSZTA2004175, oaku2002, bodnar2020, Khalifa2018, IIJIMA20121, YI1997111, tembine2011, wesson2016, PhysRevE.101.042410, WANG20178, 7039982, Obando2016, doi:10.1142/S0218127413501228, 10.5555/1345263.1345309, HU2019218, RePEc:eee:matsoc:v:61:y:2011:i:2:p:83-85}.

As a main contribution of this paper, we propose a new class of decision-making models called \textit{Kullback-Leibler Divergence Regularized Learning (KLD-RL)}. Agents using this model revise their strategies based on received payoffs, augmented with a regularization term defined by the KL divergence. As detailed in \S\ref{sec:regularizedLogit}, this regularization prevents large deviations of agents' strategy profile from the one specified by the model's regularization parameter. Consequently, with the proper selection of this parameter, using the KL divergence for regularization renders agents' strategy revisions insensitive to time delays in the payoff mechanism. This prevents oscillation in the agents' strategy revision process and, through successive updates of the regularization parameter, ensures that agents improve their strategy selections. As a result, their strategy profile is guaranteed to converge to the Nash equilibrium in a certain class of population games.

The logit dynamics model \cite{10.2307/3081987, HOFBAUER200747} is known to converge to an equilibrium state in a large class of population games, including games subject to time delays \cite{Park2019Payoff-Dynamic-, 9029756}.
However, as discussed in \cite{HOFBAUER200747}, the equilibrium state of the logit dynamics model is a perturbed version of the Nash equilibrium. This forces agents to select sub-optimal strategies, for instance, in potential games \cite{MONDERER1996124, SANDHOLM200181} with concave payoff potentials, where the Nash equilibrium is the socially optimal strategy profile. Such a significant limitation in existing models motivates our investigation of a new decision-making model.

  Below, we summarize the main contributions of this paper:
  \begin{itemize}
  \item We propose a parameterized class of KLD-RL models that generalize the existing logit dynamics model. We explain how the new class of models implements the idea of regularization in multi-agent decision-making and provides an algorithm that iteratively updates the model's regularization parameter.

  \item Leveraging stability results from recent works on higher-order learning in large population games
    \cite{9029756, Park2019Payoff-Dynamic-}, we discuss the convergence of the strategy profile to the Nash equilibrium under the KLD-RL model in an important class of population games, widely known as \textit{contractive population games} \cite{SANDHOLM2015703}. We also introduce a distributed algorithm for updating the regularization parameter, requiring only communication between neighboring agents, as defined by an underlying communication graph.

  \item We present simulations using multi-population games to demonstrate how the new model ensures convergence to the Nash equilibrium, despite time delays in the games.
  \end{itemize}

  The paper is organized as follows. In \S\ref{sec:problem_description}, we explain the multi-agent decision problem addressed in this paper. In \S\ref{sec:literature_review}, we provide a comparative review of related works. In \S\ref{sec:regularizedLogit}, we introduce the KLD-RL model and explain how to iteratively update the model's regularization parameter. Our main results establish the convergence of the strategy profile, determined by the model, to the Nash equilibrium in a class of contractive population games. In \S\ref{section:distributed_parameter_update}, we discuss a distributed approach to update the regularization parameter. In \S\ref{sec:simulation}, we present simulation results that demonstrate the effectiveness of the proposed model in learning and converging to the Nash equilibrium. We conclude the paper with a summary and future directions in \S\ref{sec:conclusion}.

  Table~\ref{tab:notation} summarizes the basic notation used throughout the paper. For all variables and parameters adopted in this paper, superscripts are used to indicate their association with a specified population, unless otherwise noted.

  %%%%%%%%%%%%%%%%%%%%%%%%%%%%%%%%%%%%%%%%%%%%%%%%%%%%%%%%% 
  \section{Problem Description} \label{sec:problem_description}
  \begin{table}
    \change{
      \begin{tabularx} {.48\textwidth} {l | X}

    $\mathbb{R}^n, \mathbb{R}^n_{\geq 0}$ &  sets of  $n$-dimensional real vectors and nonnegative real vectors.
    \\
    
    $\mathbb{X}^k, \mathbb X$ &  state spaces of population~$k$ and the society. \\

    $T\mathbb{X}^k, T\mathbb{X}$ &  tangent spaces of $\mathbb X^k$ and $\mathbb X$. \\

    $\mathrm{int}(\mathbb{X}^k), \mathrm{int}(\mathbb{X})$ &  interiors of $\mathbb X^k$ and $\mathbb X$ defined as \\ & 
    $\mathrm{int}(\mathbb{X}^k) = \{ x^k \in \mathbb{X}^k \,|\, x_i^k >0, \, 1 \leq i \leq n^k\}$ and \\ & $\mathrm{int}(\mathbb{X}) = \mathrm{int}(\mathbb X^1) \times \cdots \times \mathrm{int}(\mathbb X^M)$. \\

    $\mathcal F, D \mathcal F$ &  payoff function and its differential. \\

    $\mathbb{NE} \left( \mathcal F \right)$ &  Nash equilibrium set of $\mathcal F$, defined in \eqref{eq:nash_equilibrium}. \\
    
    $\mathbb{PNE}_{\eta, \theta} \left( \mathcal F \right)$ &  perturbed Nash equilibrium set of $\mathcal F$, defined in \eqref{eq:perturbed_nash}. \\
    
    $\mathcal D (x \,\|\, y)$ &  Kullback-Leibler divergence defined as $\sum_{i=1}^{n^k} x_i \ln \frac{x_i}{y_i}$ for $x,y \in \mathrm{int} (\mathbb X^k)$.

  \end{tabularx}
  \caption{List of basic notation}
  \label{tab:notation}
  \vspace{-5ex}
  }
\end{table}

Consider a \textit{society} consisting of $M$ \textit{populations} of decision-making \textit{agents}.\footnote{We adopt materials on population games and relevant preliminaries from \cite[\S 2]{Sandholm2010-SANPGA-2}.} We denote by $\{1, \cdots, M\}$ the populations constituting the society and by $\{1, \cdots, n^k \}$ the set of strategies available to agents in each population~$k$. Let $x^k(t) = (x_1^k(t), \cdots, x_{n^k}^k(t))$ be an $n^k$-dimensional nonnegative real-valued vector, where each entry $x_i^k(t)$ denotes the portion of population~$k$ adopting strategy~$i$ at time instant~$t$. We refer to $x^k(t)$ as the \textit{state} of population~$k$ and to the constant $m^k = \textstyle\sum_{i=1}^{n^k} x_i^k(t), ~ \forall t \geq 0$, as the \textit{mass} of the population. Also, by aggregating the states of all populations, we define the \textit{social state} $x(t) = (x^1(t), \cdots, x^M(t))$, which describes the strategy profiles across all $M$ populations at time~$t$. Let $n$ be the total number of strategies available in the society, i.e., $n = \sum_{k=1}^M n^k$. We denote the space of viable population states as $\mathbb X^k = \{ x^k \in \mathbb R_{\geq 0}^{n^k} \,\big|\, \textstyle\sum_{i=1}^{n^k} x_i^k = m^k \}$. Accordingly, we define $\mathbb X = \mathbb X^1 \times \cdots \times \mathbb X^M$ as the space of viable social states. For concise presentation, without loss of generality, we assume that $m^k = 1, \forall k \in \{1, \cdots, M \}$.
  We denote the payoff given to the agents in population~$k$ who select strategy~$i$ at time~$t$ as $p_i^k(t)$. We define $p^k(t) = (p_1^k(t), \cdots, p_{n^k}^k(t)) \in \mathbb R^{n^k}$ to represent a vector of payoffs assigned to each population~$k$, and $p(t) = (p^1(t), \cdots, p^M(t)) \in \mathbb R^n$ to refer to a vector of payoffs assigned to all agents across the entire society.

We formulate our main problem as designing an agent decision-making model that determines the social state $x(t)$ in response to the payoff vector $p(t)$, and establishing the convergence of $x(t)$ to the Nash equilibrium of the underlying game, especially when the mechanism that determines $p(t)$ -- which depends on $x(t)$ -- is subject to time delays. To solve this problem, using a formalism originally introduced in \cite{Fox2013Population-Game, Park2019Payoff-Dynamic-, 9029756}, we express the agent decision-making model as the so-called \textit{evolutionary dynamics model (EDM)} and the payoff mechanism with time delays as the so-called \textit{payoff dynamics model (PDM)}. We then study the stability of the feedback interconnection of the two models. As discussed in these references, such a compositional approach allows us to adopt the notion of passivity for the study, which we detail in \S\ref{sec:regularizedLogit}. In what follows, we review relevant definitions from the literature.

\subsection{Population Games and Time Delays in Payoff Mechanisms}
\subsubsection{Population games} \label{sec:population_game}

The game is associated with a \textit{payoff function} $\mathcal{F} = (\mathcal{F}^1, \cdots, \mathcal{F}^M)$, where $\mathcal{F}^k: \mathbb{X} \to \mathbb{R}^{n^k}$ assigns a payoff vector to each population~$k$ as $p^k(t) = \mathcal{F}^k(x(t))$. We adopt the following definition of the Nash equilibrium for $\mathcal{F}$.
\begin{definition}[Nash Equilibrium] \label{def:nash}
  An element $x^{\text{NE}}$ in $\mathbb{X}$ is called the \textit{Nash equilibrium} of the population game $\mathcal{F}$ if it satisfies the condition:
  \begin{align} \label{eq:nash_equilibrium}
    (x^{\text{NE}} - z)' \mathcal{F}(x^{\text{NE}}) \geq 0, ~ \forall z \in \mathbb{X}.
  \end{align}
\end{definition}
Population games can have multiple Nash equilibria. We denote by $\mathbb{NE}(\mathcal{F})$ the set of all Nash equilibria of $\mathcal{F}$. Below, we provide examples of population games and identify their unique Nash equilibria. These examples will be used in \S\ref{sec:simulation} to illustrate our main results.

\begin{figure} [t!]
  \center
  \includegraphics[trim={.0in .0in 0 0}, width=2.2in]{./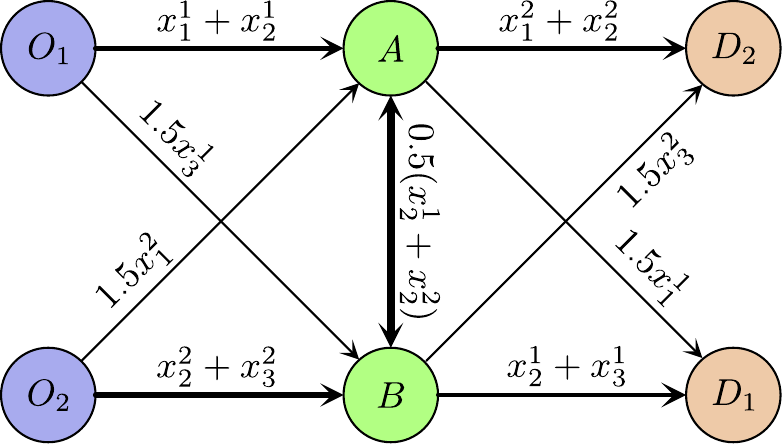}
  \caption{
    Two-Population Congestion Game. Agents in each population $k$ traverse from origin $O_k$ to destination $D_k$ using one of the following routes: $O_1 \to A \to D_1$ (Route~1), $O_1 \to A \to B \to D_1$ (Route~2), and $O_1 \to B \to D_1$ (Route~3) for population~1; $O_2 \to A \to D_2$ (Route~1), $O_2 \to B \to A \to D_2$ (Route~2), and $O_2 \to B \to D_2$ (Route~3) for population~2. We assume that when the same number of agents use the links, the diagonal links ($O_1 \to B, O_2 \to A, A \to D_1, B \to D_2$) are $50 \%$ more congested than the horizontal links, e.g., because the roads represented by the diagonal links are narrower; whereas the vertical link $A \leftrightarrow B$ is $50 \%$ less congested than the horizontal links, e.g., because the road associated with the vertical link is wider. The different weights on the links reflect this assumption. }
  \label{fig:congestion_game}
  \vspace{-1em}
\end{figure}

  \begin{example} 
    \label{example:congestion_game}
    Consider a congestion game with two populations ($M=2$). Each population is assigned a fixed origin and destination. We consider the scenario in which every agent in each population needs to repeatedly travel from its origin to its destination, e.g., to commute to work every workday. To reach their respective destinations, agents in each population use one of three available routes as depicted in Fig.~\ref{fig:congestion_game}. Each strategy in the game is defined as an agent taking one of the available routes. The associated payoff reflects the level of congestion along the selected route, which depends on the number of agents from possibly both populations using the route.

    To formalize this, we adopt the payoff function $\mathcal{F} = (\mathcal{F}^1, \mathcal{F}^2)$ defined as:
    \begin{subequations} \label{eq:congestion_game}
      \begin{align}
        \mathcal{F}^1(x^1, x^2) &= - \begin{pmatrix}
          2.5 x_1^1 + x_2^1 \\
          x_1^1 + 2.5 x_2^1 + x_3^1 + 0.5 x_2^2 \\
          x_2^1 + 2.5 x_3^1
        \end{pmatrix} \\
        \mathcal{F}^2(x^1, x^2) &= - \begin{pmatrix}
          2.5 x_1^2 + x_2^2 \\
          0.5 x_2^1 + x_1^2 + 2.5 x_2^2 + x_3^2 \\
          x_2^2 + 2.5 x_3^2
        \end{pmatrix}.
      \end{align}
    \end{subequations}
    We note that \eqref{eq:congestion_game} has a unique Nash equilibrium $x^{\text{NE}} = \left(4/9, 1/9, 4/9, 4/9, 1/9, 4/9\right)$. Since \eqref{eq:congestion_game} admits a concave potential function, i.e., there exists a concave potential function $f$ satisfying $\mathcal{F} = \nabla f$, the function $f$ attains its minimum at the Nash equilibrium. Consequently, with a particular choice of $f$, defined as the average congestion level across all six routes, $f(x^1, x^2) = (x^1)' \mathcal{F}^1(x^1, x^2) + (x^2)' \mathcal{F}^2(x^1, x^2)$, we can conclude that the average congestion is minimized at the Nash equilibrium. \QED
  \end{example}

\begin{example} 
  \label{example:zero_sum_game}
  Consider a two-population zero-sum game with payoff function $\mathcal{F} = (\mathcal{F}^1, \mathcal{F}^2)$  derived from a \textit{biased} Rock-Paper-Scissors (RPS) game \cite{Omidshafiei2019} as follows:
  \begin{subequations} \label{eq:rps_game}
    \begin{align}
      \mathcal{F}^1(x^1, x^2) &= \begin{pmatrix}
        -0.5 x_2^2 + x_3^2 \\
        0.5 x_1^2 - 0.1 x_3^2 \\
        -x_1^2 + 0.1 x_2^2
      \end{pmatrix} \\
      \mathcal{F}^2(x^1, x^2) &= \begin{pmatrix}
        -0.5 x_2^1 + x_3^1 \\
        0.5 x_1^1 - 0.1 x_3^1 \\
        -x_1^1 + 0.1 x_2^1
      \end{pmatrix}.
    \end{align}
  \end{subequations}
  The study of zero-sum games has important implications in security-related applications. For example, attacker-defender (zero-sum) game formulations \cite{doi:10.1063/1.5029343, Alpcan2015} can be used to predict an attacker’s strategy at the Nash equilibrium and to design the best defense strategy.

  Agents in each population $k \in \{1,2\}$ can select one of three strategies: rock ($x_1^k$), paper ($x_2^k$), or scissors ($x_3^k$). The payoff vector $\mathcal{F}^k(x^1, x^2)$ assigned to each population~$k$ reflects the chances of winning the game against its opponent population, as illustrated in Fig.~\ref{fig:zerosum_game}. The agents are engaged in a multi-round game in which, based on the payoff vector received in the previous round, those who are given the opportunity revise their strategy selections.
  Note that the game has a unique Nash equilibrium $x^{\text{NE}} = (1/16, 10/16, 5/16, 1/16, 10/16, 5/16)$, at which each population minimizes the opponent population's maximum gain (or equivalently maximizes its worst-case (minimum) gain). \QED
\end{example}
\begin{figure} [t!]
  \center
  \includegraphics[trim={.0in .0in 0 0}, width=2.2in]{./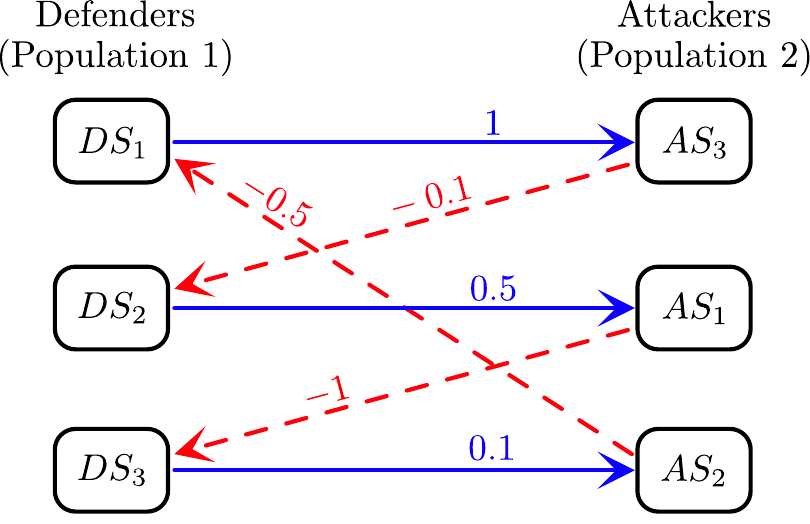}
  \caption{Two-Population Zero-Sum Game. Agents in the defender population (population~1) select defending strategies $(DS_1, DS_2, DS_3)$ to play against those in the attacker population (population~2) who adopt attacking strategies $(AS_1, AS_2, AS_3)$. The positive (negative) weight on the blue (red dotted) arrow between $DS_i$ and $AS_j$ denotes the reward (loss) associated with population~1 when the defenders and attackers adopt $DS_i$ and $AS_j$, respectively. The payoff $\mathcal{F}^1_i(x^1, x^2)$ associated with $DS_i$ of population~1 is the sum of the rewards and losses when $x^2$ is the state of population~2; whereas the payoff $\mathcal{F}^2_j(x^1, x^2)$ associated with $AS_j$ of population~2 is the negative sum of the rewards and losses when $x^1$ is the state of population~1.
  }
  \label{fig:zerosum_game}
  \vspace{-1em}
\end{figure}

We make the following assumption on  payoff function~$\mathcal F$.
  \begin{assumption} 
    \label{assumption:payoff_function}
    The differential $D \mathcal F$ of $\mathcal F$, defined as the map that satisfies $\lim_{h \to 0} \frac{1}{h} (\mathcal F (z + h \tilde z) - \mathcal F (z)) = D \mathcal F(z) \tilde z, ~ \forall z \in \mathbb X, \tilde z \in T\mathbb X$, exists and is continuous on $\mathbb X$.
  \end{assumption}

  Since $\mathbb X$ is compact, both $\mathcal F$ and $D \mathcal F$ are bounded. Consequently, there exist constants $B_{\mathcal F}$ and $B_{D \mathcal F}$ such that
  $B_{\mathcal F} = \max_{z \in \mathbb X} \| \mathcal F (z) \|_2$ and $B_{D \mathcal F} = \max_{z \in \mathbb X} \| D \mathcal F (z) \|_2$, respectively.
Note that any affine payoff function $\mathcal F(x) = Fx + b$, with $F \in \mathbb R^{n \times n}$ and $b \in \mathbb R^n$, e.g., \eqref{eq:congestion_game}, \eqref{eq:rps_game}, satisfies Assumption~\ref{assumption:payoff_function}. Contractive population games are defined as follows.\footnote{Contractive games (strictly contractive games) were previously referred to as stable games (strictly stable games) \cite{RePEc:eee:jetheo:v:144:y:2009:i:4:p:1665-1693.e4}. We adopt the latest naming convention.}
\begin{definition}[Contractive Population Game \cite{SANDHOLM2015703}] \label{def:contractive}
  A population game $\mathcal F$ is called \textit{contractive} if it holds that
  \begin{align} \label{eq:contractive}
    (w - z)' (\mathcal F(w) - \mathcal F(z)) \leq 0, ~ \forall w, z \in \mathbb X.
  \end{align}
\end{definition}
If the equality holds if and only if $w = z$, $\mathcal F$ is called \textit{strictly contractive}.

Both games in Examples~\ref{example:congestion_game} and \ref{example:zero_sum_game} are contractive. If a population game $\mathcal{F}$ is contractive, then its Nash equilibrium set $\mathbb{NE}(\mathcal{F})$ is convex; moreover, if $\mathcal{F}$ is strictly contractive, then it has a unique Nash equilibrium \cite[Theorem 13.9]{SANDHOLM2015703}.

\subsubsection{Time delays in payoff mechanisms}

\change{
  Unlike the standard formulation discussed in \S\ref{sec:population_game}, this work considers a payoff vector that depends not only on the current social state but also on past social states. This approach captures the time delays inherent in the payoff mechanisms of population games. We start by discussing two special cases and then formally define the class of population games with time delays.
}

\paragraph{Payoff Function with a Time-Dependent Delay}
\label{section:delayed_payoff_function}
  Consider that the payoff vector $p(t)$ at time $t$ depends on the past social state $x(t-d(t))$, where $d(t)$ is a non-negative differentiable function denoting a time-dependent delay: 
  \begin{align} \label{eq:population_game_time_delay}
    p(t) = \mathcal F ( x(t-d(t))), ~t \geq \tau_0 \geq 0
  \end{align}
with $p(t) = \mathcal F(x(0)), ~ \tau_0 > t \geq 0$. The time constant $\tau_0$ is chosen to ensure that $t - d(t) \geq 0$ for all $t \geq \tau_0$. We assume fixed positive constants $B_{d}$ and $B_{\dot d}$ such that $d(t) \in [0, B_{d}]$ and $\dot d(t) \in [-B_{\dot d}, 1)$ hold. Since $\dot d(t) < 1$ holds, there exists a unique $\tau_0$ such that $t - d(t) \geq 0$ for all $t \geq \tau_0$. By the continuity of $\mathcal F$, when the social state $x(t)$ converges to $\bar x$, so does the payoff vector $p(t)$ to $\mathcal F(\bar x)$.

  As an example of \eqref{eq:population_game_time_delay}, recall the congestion game from Example~\ref{example:congestion_game}. Suppose an authority collects data on road congestion and its dissemination is subject to a time delay, $d(t)$. The delay may be associated with time it takes to collect and process sufficient traffic data. The underlying payoff mechanism can be formulated as \eqref{eq:population_game_time_delay}. We assume that $d(t)$ is unknown to the agents, but they have estimated bounds $[0, B_d]$ and $[-B_{\dot d}, 1)$ for $d(t)$ and $\dot d(t)$, respectively.

  \begin{remark}
    In \eqref{eq:population_game_time_delay}, the payoff function is subject to a single delay. In practical scenarios, such as the congestion game, the time delay would vary across available strategies. As an immediate extension of \eqref{eq:population_game_time_delay}, we can consider the payoff function with multiple time-dependent delays formulated as
    \begin{align} \label{eq:population_game_multiple_time_delay}
      p(t) = \textstyle \sum_{i=1}^{N_d} \mathcal F_i (x( t - d_i(t))),
    \end{align}
    where $d_i(t) \in [0, B_{d_i}]$ and $\dot d_i(t) \in [-B_{\dot d_i}, 1)$ for fixed positive constants $B_{d_i}$ and $B_{\dot d_i}$, respectively. The payoff function $\mathcal F$ of an underlying population game is recovered when $d_i(t) = 0, ~\forall i \in \{1, \cdots, N_d\}$, i.e., $\mathcal F (x(t)) = \sum_{i=1}^{N_d} \mathcal F_i (x(t))$. Similar analysis established for \eqref{eq:population_game_time_delay} can be applied for \eqref{eq:population_game_multiple_time_delay}.
  \end{remark}

\paragraph{Smoothing Payoff Dynamics Model} \label{section:smoothing_pdm}
We adopt arguments similar to those in \cite[\S V]{Park2019Payoff-Dynamic-} to derive the smoothing PDM \cite{9029756}. Suppose that each agent in population~$k$ has the opportunity for strategy revision at every arrival time of an independent and identically distributed (i.i.d.) Poisson process with parameter~$1$. Let $t$ and $t+h$ be two consecutive strategy revision times for agents in population~$k$. Note that, since the agents' strategy revision is governed by i.i.d. Poisson processes, $h$ approaches zero as $N^k$, the number of agents in population~$k$, tends to infinity. At time $t+h$, the revising agent evaluates the payoff $\mathcal F_j^k (x(t+h))$ of its current strategy~$j$ and shares this information across population~$k$. Subsequently, all agents in the population update \textit{payoff estimates} for available strategies as follows:\footnote{The estimation of the vector $(\mathcal F_1^k (x(t)), \cdots, \mathcal F_{n^k}^k(x(t)))$ is required because the population receives the payoff associated with only one of the strategies that the revising agent currently selects at each revision time $t$.}
\begin{align} \label{eq:payoff_estimation}
  P_i^k(t \!+\! h) \!=\! \begin{cases}
    P_i^k(t) \!+\! h \lambda \frac{\mathcal F_i^k ( x(t+h) ) - P_i^k(t)}{x_i^k(t+h)} & \!\text{if $i = j$} \\
    P_i^k(t) & \!\text{otherwise.}
  \end{cases}
\end{align}
The variable $P_i^k(t)$ is the estimate of $\mathcal F_i^k (x(t))$, and the parameter $\lambda$ is the estimation gain. In expectation, \eqref{eq:payoff_estimation} satisfies
\begin{align} \label{eq:payoff_estimation_expectation}
  \mathbb E \left( \frac{P_i^k(t \!+\! h) \!-\! P_i^k(t) }{h} \right) \!=\! -\lambda \mathbb E ( P_i^k(t) \!-\! \mathcal F_i^k ( x(t \!+\! h) ) ).
\end{align}
For a large number of agents, i.e., as $N^k$ tends to infinity, we can approximate \eqref{eq:payoff_estimation_expectation} with the following equation:
\begin{align} \label{eq:smoothing_pdm}
  \dot p_i^k(t) = -\lambda \left( p_i^k (t) - \mathcal F_i^k \left( x(t) \right) \right).
\end{align}
The variable $p_i^k(t)$ can be viewed as an approximation of $\mathbb E ( P_i^k (t) )$. We refer to \eqref{eq:smoothing_pdm} as the \textit{smoothing PDM}. Note that \eqref{eq:smoothing_pdm} can be interpreted as a low-pass filter applied to the signal $\mathcal F_i^k ( x(t) ), ~ t \geq 0$ and the filtering causes a time delay in computing the payoff estimates. Consequently, the filter output $p_i^k(t)$ lags behind the input $\mathcal F_i^k ( x(t) )$.

  To illustrate the application of the smoothing PDM, consider the two-population zero-sum game described in Example~\ref{example:zero_sum_game}. Here, when an agent in each population~$k$ receives the opportunity to revise its strategy at time~$t+h$, it evaluates the performance of its current strategy~$j$ in terms of the payoff $\mathcal F_j^k ( x(t+h) )$  and shares this with other agents in the same population. All agents in population~$k$ then update their payoff estimates according to \eqref{eq:payoff_estimation}, and a solution to the smoothing PDM \eqref{eq:smoothing_pdm} approximates these payoff estimates.

  \change{
    To formalize the class of population games studied in this paper, we adopt the causal mapping notation $\mathfrak G$, which maps a population state trajectory $x(t), \, t \geq 0$ to a payoff vector trajectory $p(t), \, t \geq 0$. The formal definition of $\mathfrak G$ is provided in Appendix~\ref{def:causal_mapping}. This mapping encapsulates the dependency of the payoff vector $p(t)$ at time $t$ on the social states from current and prior times, represented as $x(\tau)$ for $0 \leq \tau \leq t$. This dependency is expressed by the following relationship:
  \begin{align} \label{eq:pdm}
    p(t) = \left( \mathfrak G (x) \right) (t),
  \end{align}
where we represent the population state trajectory as a time-dependent function $x: \mathbb R_{\geq 0} \to \mathbb X$.}
  We relate the mapping $\mathfrak G$ to the payoff function $\mathcal F$ of an underlying population game by requiring that when $x(t)$ converges to a steady state, i.e., $\lim_{t \to \infty} x(t) = \bar x$, $p(t)$ converges to $\mathcal F(\bar x)$, i.e., $\lim_{t \to \infty} p(t) = \mathcal F(\bar x)$.
  Following the same naming convention as in \cite{9029756}, we refer to \eqref{eq:pdm} as the \textit{payoff dynamics model (PDM)}. 

  \begin{remark}
    The original definition \cite[Definition~7]{9029756} of the PDM, described as a finite-dimensional dynamical system, is 
    \begin{subequations} \label{eq:pdm_finite_dimensional}
      \begin{align}
        \dot q(t) &= \mathcal G(q(t), x(t)) \\
        p(t) &= \mathcal H(q(t), x(t)),
      \end{align}
    \end{subequations}
    where $q(t) \in \mathbb R^n$ represents an internal state of the PDM. The definition \eqref{eq:pdm} of the PDM adopted in this paper expands on \eqref{eq:pdm_finite_dimensional} to represent certain types of payoff mechanisms, such as those in \eqref{eq:population_game_time_delay}, that cannot be represented by \eqref{eq:pdm_finite_dimensional}.
  \end{remark}

We make the following assumption on \eqref{eq:pdm}.
\begin{assumption} \label{assumption:pdm}
  The PDM \eqref{eq:pdm} satisfies the following.
  \begin{enumerate} 
    \item \label{assumption:stationary_model} Suppose $\mathcal F$ is the payoff function of an underlying population game and the given social state $x(t)$ is differentiable. The PDM \eqref{eq:pdm} satisfies $\lim_{t \to \infty} \| \dot x(t) \|_2 = 0 \implies \lim_{t \to \infty} \| p(t) - \mathcal F(x(t)) \|_2 = 0,$ where $p(t)$ is the payoff vector determined by \eqref{eq:pdm} with  $x(t), ~ t \geq 0$.

  \item Given a social state trajectory $x(t), ~ t \geq 0$, \eqref{eq:pdm} computes a unique payoff vector trajectory $p(t), ~ t \geq 0$. In other words, for any pair of social state trajectories $x(t), ~ t \geq 0$ and $y(t), ~ t \geq 0$, it holds that $x(t) = y(t), ~ \forall t \geq 0 \implies ( \mathfrak G (x) )(t) = ( \mathfrak G (y) )(t), ~ \forall t \geq 0.$

  \item If the social state $x(t)$ is differentiable, so is the resulting payoff vector $p(t)$. Given that $\dot x(t)$ is bounded, both $p(t)$ and $\dot p(t)$ are bounded, i.e., there exist $B_p$ and $B_{\dot p}$ satisfying $\| p(t) \|_2 \leq B_p$ and $\| \dot p(t) \|_2 \leq B_{\dot p}$ for all $t \geq 0$, respectively.
  \end{enumerate}
\end{assumption}
Note that both the payoff function with a time-dependent delay \eqref{eq:population_game_time_delay} and the smoothing PDM \eqref{eq:smoothing_pdm} satisfy Assumption~\ref{assumption:pdm}.\footnote{In particular, by using the mean value theorem and Assumption~\ref{assumption:payoff_function}, we can verify that \eqref{eq:population_game_time_delay} satisfies Assumptions~\ref{assumption:pdm}-1 and \ref{assumption:pdm}-3. Additionally, by applying the same arguments as those used in the proof of \cite[Proposition~6]{Park2019Payoff-Dynamic-}, we can confirm that \eqref{eq:smoothing_pdm} meets Assumptions~\ref{assumption:pdm}-1 and \ref{assumption:pdm}-3.} In light of Assumption~\ref{assumption:pdm}-\ref{assumption:stationary_model}, as originally suggested in \cite{Fox2013Population-Game}, we can view the PDM \eqref{eq:pdm} as a \textit{dynamic modification} of the conventional population game model.

  \subsection{Strategy Revision and Evolutionary Dynamics Model} \label{sec:decision_making_model}
  Suppose each agent in population~$k$ revises its strategy selection at each arrival time $t$ of an i.i.d. Poisson process with parameter~$1$, where the strategy revision depends on the payoff vector $p^k(t)$ and population state $x^k(t)$. We adopt the evolutionary dynamics framework \cite[Part~II]{Sandholm2010-SANPGA-2} in which the change of the population state $x^k(t)$ when the number of agents $N^k$ in the population tends to infinity is described by the following ordinary differential equation:
  For $i$ in $\{ 1, \cdots, n^k \}$ and $k$ in $\{1, \cdots, M \}$,
  \begin{multline} \label{eq:edm}
    \dot x_i^k(t) = \textstyle\sum_{j=1}^{n^k} x_j^k(t) \mathcal T_{ji}^k ( x^k(t), p^k(t) ) \\
    - x_i^k(t) \textstyle\sum_{j=1}^{n^k} \mathcal T_{ij}^k ( x^k(t), p^k(t) ),
  \end{multline}
  where the payoff vector $p^k(t)$ is determined by the PDM \eqref{eq:pdm}. The \textit{strategy revision protocol} $\mathcal T_{ji}^k (z^k, r^k)$ defines the probability that each agent in population~$k$ switches its strategy from $j$ to $i$ when the population state and payoff vector take the values $z^k \in \mathbb X^k$ and $r^k \in \mathbb R^{n^k}$, respectively.\footnote{The reference \cite[\S 5]{Sandholm2010-SANPGA-2} summarizes well-known protocols developed in the game theory literature. Also, we refer the interested reader to \cite[\S 10]{Sandholm2010-SANPGA-2} and \cite[\S IV]{9029756} for the derivation of \eqref{eq:edm}.} As in \cite{9029756}, we refer to \eqref{eq:edm} as the \textit{evolutionary dynamics model (EDM)}.
  Among existing strategy revision protocols, the most relevant to our study is the \textit{logit protocol} defined as
  \begin{align} \label{eq:StandardLogitProtocol}
    \mathcal T_{i}^{\text{\scriptsize Logit}} ( r^k ) = \frac{\exp ( \eta^{-1} r_i^k )}{\sum_{l=1}^{n^k} \exp ( \eta^{-1} r_l^k )},
  \end{align}
  where $\eta$ is a positive constant and $r^k = (r_1^k, \cdots, r_{n^k}^k) \in \mathbb R^{n^k}$ is the value of population~$k$'s payoff vector. The agents adopting the logit protocol, i.e., $\mathcal T_{ji}^k ( z^k, r^k ) = \mathcal T_{i}^{\text{\scriptsize Logit}} ( r^k )$, revise their strategy choices based only on the payoff vector and the probability of switching to strategy $i$ is independent of current strategy $j$.

  \begin{figure} [t]
    \center
    \subfigure[$\eta=0.1$]{
      \includegraphics[trim={.0in .2in 0 .1in},width=1.6in]{./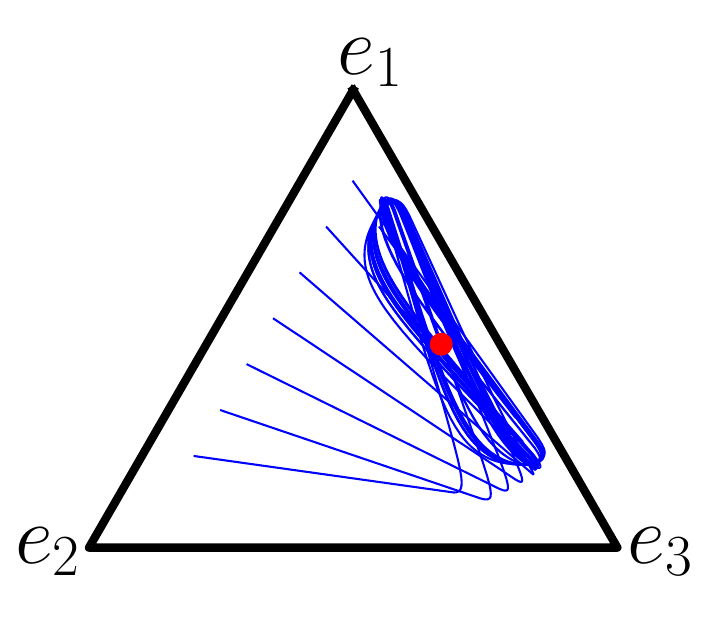}
      \label{fig:standard_logit_models_delayed_payoff_a}
    }
    \subfigure[$\eta=4.5$]{
      \includegraphics[trim={.0in .2in 0 .1in},width=1.6in]{./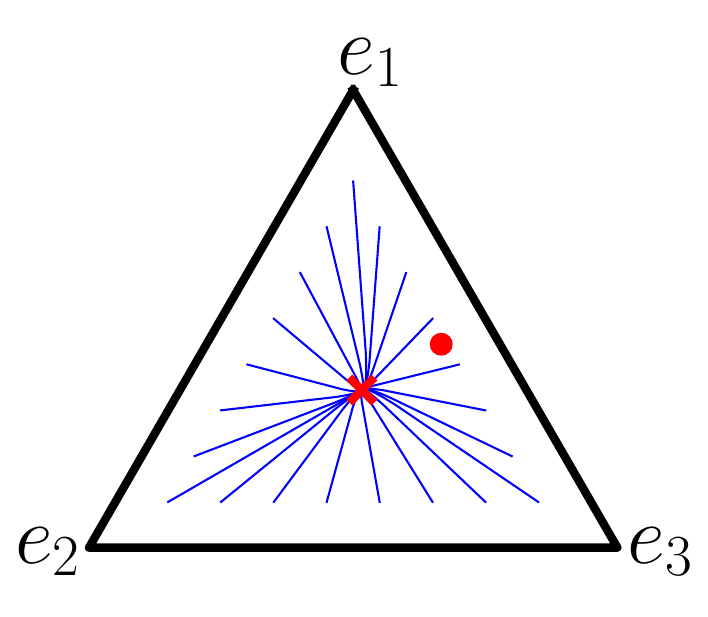}
      \label{fig:standard_logit_models_delayed_payoff_b}
    }

    \caption{State trajectories of population~$1$ under the logit protocol \eqref{eq:StandardLogitProtocol} with $\eta=0.1, 4.5$ in the congestion game \eqref{eq:congestion_game}. The payoff vector is determined by \eqref{eq:population_game_time_delay} subject to a fixed unit time delay  ($d(t)=1, ~ \forall t \geq 0$). The red circle in both (a) and (b) marks the Nash equilibrium and the red X mark in (b) denotes the unique limit point of all the trajectories.
    }
    \label{fig:standard_logit_models_delayed_payoff}
    \vspace{-.15in}
  \end{figure}
  \begin{figure} [t]
    \center
    \subfigure[$\eta=0.1$]{
      \includegraphics[trim={.0in .2in 0 .1in},width=1.6in]{./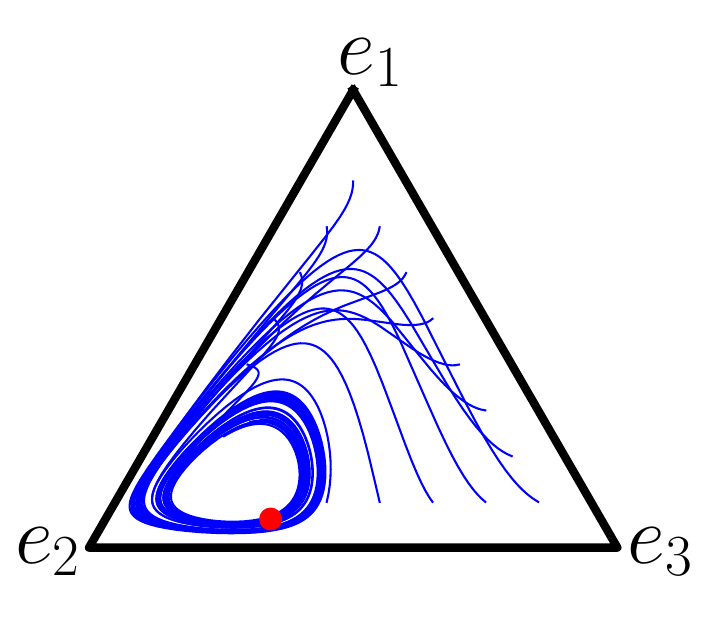}
      \label{fig:standard_logit_models_smoothing_pdm_a}
    }
    \subfigure[$\eta=0.6$]{
      \includegraphics[trim={.0in .2in 0 .1in},width=1.6in]{./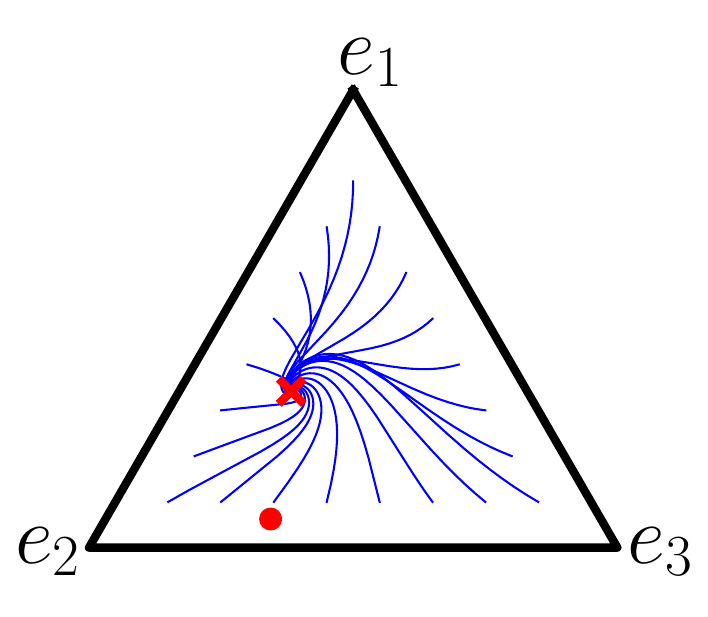}
      \label{fig:standard_logit_models_smoothing_pdm_b}
    }

    \caption{State trajectories of population~$1$ under the logit protocol \eqref{eq:StandardLogitProtocol} with $\eta=0.1, 0.6$ in the zero-sum game \eqref{eq:rps_game}. The payoff vector is determined by  \eqref{eq:smoothing_pdm} with $\lambda=1$. The red circle in both (a) and (b) marks the Nash equilibrium and the red X mark in (b) denotes the unique limit point of all the trajectories.
    }
    \label{fig:standard_logit_models_smoothing_pdm}
    \vspace{-.2in}
  \end{figure}

  When the payoff mechanisms are subject to time delays as in \eqref{eq:population_game_time_delay} or \eqref{eq:smoothing_pdm}, under existing EDMs, the social state tends to oscillate or converge to equilibrium points that are different from the Nash equilibrium. To illustrate this,
  using the logit protocol \eqref{eq:StandardLogitProtocol} with two different values of $\eta$, Figs.~\ref{fig:standard_logit_models_delayed_payoff} and \ref{fig:standard_logit_models_smoothing_pdm} depict social state trajectories derived from \eqref{eq:edm} in the congestion game \eqref{eq:congestion_game} where the payoff function is subject to a unit time delay \eqref{eq:population_game_time_delay} and in the zero-sum game \eqref{eq:rps_game} where the payoff vector is defined by the smoothing PDM \eqref{eq:smoothing_pdm}, respectively. We observe that when $\eta$ is small, the resulting social state trajectories oscillate, whereas with a sufficiently large $\eta$, the trajectories converge to an equilibrium point which is located away from the Nash equilibrium.

  To improve the limitations of existing protocols, we propose a new strategy revision protocol and analyze its convergence properties to rigorously show that the new protocol allows all agents to asymptotically attain the Nash equilibrium even with time delays in payoff mechanisms.
  We formally state the main problem as follows.

\noindent \rule{\columnwidth}{1pt}
\begin{problem} \label{problem}
  Design a strategy revision protocol $\mathcal T_{ji}^k$ and find conditions on the PDM \eqref{eq:pdm} and EDM \eqref{eq:edm} under which the social state $x(t)$ converges to the Nash equilibrium set $\mathbb{NE} \left( \mathcal F \right)$:
  \begin{align*}
    \lim_{t \to \infty} \inf_{z \in \mathbb{NE} (\mathcal F)} \|x(t) - z\|_2 = 0.
  \end{align*}
\end{problem}
\noindent \rule{\columnwidth}{1pt}

%%%%%%%%%%%%%%%%%%%%%%%%%%%%%%%%%%%%%%%%%%%%%%%%%%%%%%%%% 
\section{Literature Review} \label{sec:literature_review}

Multi-agent decision problems formalized by the population games and evolutionary dynamics framework have been a major research theme in control systems and neighboring research communities due to their importance in a wide range of applications \cite{7422154, 9561809, doi:10.1080/00207179.2010.501389, 6502215, 5345810, 9304437, 6459524}. As well-documented in \cite{hofbauer_sigmund_1998, Sandholm2010-SANPGA-2}, the framework has a long and concrete history of research endeavors in developing Lyapunov stability-based techniques to analyze the long-term behavior of decision-making models.

The authors of \cite{Fox2013Population-Game} present pioneering work in generalizing the conventional population games formalism to include \textit{dynamic} payoff mechanisms, an earlier form of the PDM, and in exploring the use of passivity from dynamical system theory \cite{willems_dissipative_1972} for stability analysis, unifying notable stability results in the game theory literature, e.g., \cite{RePEc:eee:jetheo:v:144:y:2009:i:4:p:1665-1693.e4}. In subsequent works, such as \cite{9483414, MARTINEZPIAZUELO2022110227}, the PDM formalism has been adopted to model time delays in payoff mechanisms, as in our work, and also to design payoff mechanisms that incentivize decision-making agents to learn and attain the \textit{generalized Nash equilibrium}, i.e., the Nash equilibrium satisfying given constraints on agent decision-making.

Further studies have led to a more concrete characterization of stability and the development of passivity-based tools for convergence analysis in population games. The tutorial article \cite{9029756} and its supplementary material \cite{Park2019Payoff-Dynamic-} detail such formalization and technical discussions on $\delta$-passivity in population games and a wide class of EDMs. The authors of \cite{9219202} discuss a more general framework -- dissipativity tools -- for convergence analysis and explain its importance in analyzing road congestion with mixed autonomy.\footnote{Although adopting the dissipativity tool of \cite{9219202} would lead to more general discussions on convergence analysis, for conciseness, we adopt the passivity-based approaches \cite{Fox2013Population-Game, 9029756} as these are sufficient to establish our main results.} We also refer the interested reader to \cite{9781277, 9022871} for different applications of passivity/dissipativity theory in finite games.

There is a substantial body of literature that investigates the effect of time delays in multi-agent games \cite{ALBOSZTA2004175, oaku2002, bodnar2020, Khalifa2018, IIJIMA20121, YI1997111, tembine2011, wesson2016, PhysRevE.101.042410, WANG20178, 7039982, Obando2016, doi:10.1142/S0218127413501228, 10.5555/1345263.1345309, HU2019218, RePEc:eee:matsoc:v:61:y:2011:i:2:p:83-85}. These references, as we also illustrated in Figs.~\ref{fig:standard_logit_models_delayed_payoff_a} and \ref{fig:standard_logit_models_smoothing_pdm_a}, explain that such time delays result in the oscillation of state trajectories. In particular, the references \cite{YI1997111, tembine2011, ALBOSZTA2004175, wesson2016, 10.5555/1345263.1345309, Obando2016, 7039982} discuss the stability of the replicator dynamics in population games defined by affine payoff functions that are subject to time delays.
Notably, \cite{wesson2016, PhysRevE.101.042410, doi:10.1142/S0218127413501228} adopt Hopf bifurcation analysis to rigorously show that oscillation of state trajectories emerges as the time delay increases. Stability and bifurcation analysis on other types of EDMs, such as the best response dynamics and imitation dynamics, in population games with time delays are investigated in \cite{tembine2011, oaku2002, PhysRevE.101.042410, HU2019218, RePEc:eee:matsoc:v:61:y:2011:i:2:p:83-85, WANG20178}. Whereas these works regard the time delays as deterministic parameters in their stability analysis, others \cite{Khalifa2018, IIJIMA20121, HU2019218} study the stability of the Nash equilibrium when the time delays are defined by random variables with exponential, uniform, or discrete distributions.

  Regularization in designing agent decision-making models has been explored in multi-agent games. References \cite{10.1137/S0363012903437976, Mertikopoulos2016, MERTIKOPOULOS2018315, 10.2307/24540967, 9022871, 9781277} adopt regularization to design reinforcement learning models and discuss convergence to the Nash equilibrium in finite games. \cite{10.1137/S0363012903437976} presents an earlier work of so-called exponentially discounted reinforcement learning -- later further developed in \cite{10.2307/24540967, 9022871} -- and discusses how the regularization improves the convergence of the learning dynamics.
  The authors of \cite{Mertikopoulos2016} provide extensive discussions on reinforcement learning models in finite games including the exponentially discounted reinforcement learning model.
  \cite{MERTIKOPOULOS2018315} explores the use of regularization in population game settings and proposes \textit{Riemannian game dynamics}, where the regularization is defined by a Riemannian metric such as the Euclidean norm.
  More recent works \cite{9022871, 9781277, 9029756} explain the benefit of the regularization using passivity theory,
  rigorously showing that the regularization in agent decision-making models enhances the models' passivity measure.

Unlike existing studies on the effect of time delays in population games, which focus on identifying technical conditions under which oscillation of state trajectories emerges, we propose a new model that guarantees the convergence of the social state to the Nash equilibrium in a class of contractive population games that are subject to time delays. Although the idea of regularization and passivity-based analysis have been reported in the multi-agent games literature, all previous results only establish the convergence to the ``perturbed'' Nash equilibrium.

  This paper substantially extends our earlier work \cite{9483414} by generalizing convergence results for the KLD-RL model to a broader class of PDMs, as discussed in \S\ref{sec:regularizedLogit}. For example, we apply the main results to scenarios where the payoff vector is determined by the payoff function with a time-dependent delay and the smoothing PDM.
  In contrast, the conference version \cite{9483414} only discusses results for a payoff function with a fixed delay. Additionally, this paper extends \cite{9483414} to multi-population scenarios and proposes a distributed approach that allows individual agents to update regularization parameters of the new model, as explained in \S\ref{section:distributed_parameter_update}. In this approach, agents only need to assess the state of the population to which they belong and the payoff vector associated with their available strategies.  The approach also accounts for the fact that the information exchange between agents is limited according to a given communication graph.

%%%%%%%%%%%%%%%%%%%%%%%%%%%%%%%%%%%%%%%%%%%%%%%%%%%%%%%%%   
\section{Learning Nash Equilibrium with \\ Delayed Payoffs}
\label{sec:regularizedLogit}
Given $z^k = (z_1^k, \cdots, z_{n^k}^k)$ and $\theta^k = (\theta_1^k, \cdots, \theta_{n^k}^k)$, both belonging to the interior of the population state space $\mathrm{int}(\mathbb{X}^k)$, we define the Kullback-Leibler divergence (KLD) as
\begin{align} \label{eq:kl_divergence}
  \mathcal{D}(z^k \,\|\, \theta^k) = \textstyle\sum_{i=1}^{n^k} z_i^k \ln \frac{z_i^k}{\theta_i^k}.
\end{align}
We compute the gradient of \eqref{eq:kl_divergence} in $\mathbb{X}^k$ (with respect to the first argument $z^k$) as
\begin{align} \label{eq:gradient_kld}
  \nabla \mathcal{D}(z^k \,\|\, \theta^k) = \begin{pmatrix} \ln \frac{z_1^k}{\theta_1^k} & \cdots & \ln \frac{z_{n^k}^k}{\theta_{n^k}^k} \end{pmatrix}'.
\end{align}
Note that \eqref{eq:kl_divergence} is a convex function of $z^k$. For notational convenience, we use
$\mathcal{D}(z \,\|\, \theta) = \sum_{k=1}^{M} \mathcal{D}(z^k \,\|\, \theta^k)
$ and
$\nabla \mathcal{D}(z \,\|\, \theta) = (\nabla \mathcal{D}(z^1 \,\|\, \theta^1), \cdots, \nabla \mathcal{D}(z^M \,\|\, \theta^M)).$

For given $\theta^k \in \mathrm{int}(\mathbb{X}^k)$, using \eqref{eq:kl_divergence}, we define the KLD Regularized Learning (KLD-RL) protocol $\mathcal{T}^{\text{\scriptsize KLD-RL}}(\theta^k, r^k) = (\mathcal{T}_1^{\text{\scriptsize KLD-RL}}(\theta^k, r^k), \cdots, \mathcal{T}_{n^k}^{\text{\scriptsize KLD-RL}}(\theta^k, r^k))$ that maximizes a regularized average payoff:
\begin{align} \label{eq:maximization}
  \mathcal{T}^{\text{\scriptsize KLD-RL}}(\theta^k, r^k) = \argmax_{z^k \in \mathrm{int}(\mathbb{X}^k)} ((z^k)' r^k - \eta \mathcal{D}(z^k \,\|\, \theta^k)),
\end{align}
where $r^k \in \mathbb{R}^{n^k}$ is the value of population $k$'s payoff vector, and $\eta > 0$ is a weight on the regularization. Under this protocol, agents revise their strategies to maximize the cost in \eqref{eq:maximization}, combining the average payoff $(z^k)' r^k$ with the regularization $\mathcal{D}(z^k \,\|\, \theta^k)$ weighted by $\eta$.

Following a similar argument as in \cite{10.2307/3081987}, we can identify a unique solution to \eqref{eq:maximization} as
\begin{align} \label{eq:kld_rl}
  \mathcal{T}_i^{\text{\scriptsize KLD-RL}}(\theta^k, r^k) = \frac{\theta_i^k \exp(\eta^{-1} r_i^k)}{\sum_{l=1}^{n^k} \theta_l^k \exp(\eta^{-1} r_l^k)}.
\end{align}
A key aspect of the KLD-RL protocol \eqref{eq:kld_rl} is using $\theta^k$ as a tuning parameter. As a special case, when $\theta^k = x^k$, $\eqref{eq:kld_rl}$ becomes the imitative logit protocol \cite[Example~5.4.7]{Sandholm2010-SANPGA-2}. In \S\ref{sec:iterative_regularization}, we propose an algorithm to compute an appropriate value for $\theta^k$ that guarantees the convergence of the social state to the Nash equilibrium set.

To further discuss the effect of $\theta^k$ on agents' strategy revision, consider the following two special cases:
\begin{itemize}
\item \textbf{Case I:} If $r_1^k = \cdots = r_{n^k}^k$, then
  \begin{align*}
    \mathcal{T}_i^{\text{\scriptsize KLD-RL}}(\theta^k, r^k) = \theta_i^k.
  \end{align*}
\item \textbf{Case II:} If $\theta_1^k = \cdots = \theta_{n^k}^k$, then
  \begin{align*}
    \mathcal{T}_i^{\text{\scriptsize KLD-RL}}(\theta^k, r^k) = \frac{\exp(\eta^{-1} r_i^k)}{\sum_{l=1}^{n^k} \exp(\eta^{-1} r_l^k)}.
  \end{align*}
\end{itemize}
From (\textbf{Case I}), we observe that $\theta^k$ serves as a bias in the agents' strategy revision. When the elements of the payoff vector are identical across all strategies, agents tend to select strategies with a higher value of $\theta_i^k$. When there is no bias (\textbf{Case II}), \eqref{eq:kld_rl} is equivalent to the logit protocol \eqref{eq:StandardLogitProtocol}.

\begin{figure} [t]
  \centering
  \includegraphics[trim={.0in .0in 0 0}, width=3.2in]{./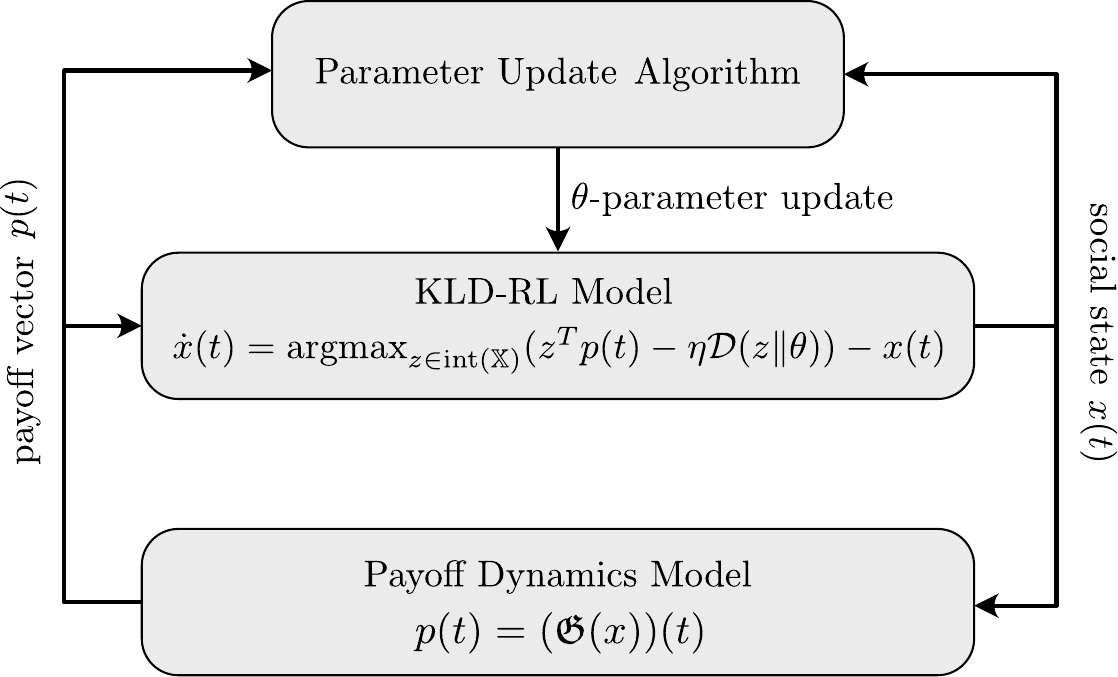}
  \caption{A feedback interconnection illustrating the payoff dynamics model and the KLD-RL model along with an algorithm for updating the regularization parameter $\theta$.} 
  \label{fig:closed_loop}
\end{figure}

  To study the asymptotic behavior of the EDM, as defined by the KLD-RL protocol in \eqref{eq:kld_rl}, i.e., $\mathcal{T}_{ji}^k (z^k, r^k) = \mathcal{T}_{i}^{\text{\scriptsize KLD-RL}} (r^k)$, we express the state equation of the closed-loop model consisting of \eqref{eq:pdm} and \eqref{eq:edm} as follows: For $i \in \{1, \cdots, n^k\}$ and $k \in \{1, \cdots, M\}$, 
  \begin{subequations} \label{eq:closed_loop}
    \begin{align} 
      \dot x_i^k(t) &= \frac{\theta_i^k \exp(\eta^{-1} p_i^k(t))}{\sum_{l=1}^{n^k} \theta_l^k \exp(\eta^{-1} p_l^k(t))} - x_i^k(t) \label{eq:closed_loop_a} \\
      p(t) &= (\mathfrak{G} (x))(t). \label{eq:closed_loop_b}
    \end{align}
  \end{subequations}
We assume that given an initial condition $(x(0), p(0)) \in \mathbb{X} \times \mathbb{R}^n$, there is a unique solution to \eqref{eq:closed_loop}. Since $p(t)$ is bounded by Assumption~\ref{assumption:pdm}-3, the social state $x(t)$ belongs to $\mathrm{int}(\mathbb{X})$ for all $t \geq 0$. 

In \S\ref{sec:preliminary_convergence_results}, we present preliminary convergence analysis for \eqref{eq:closed_loop} with parameter $\theta$ fixed. Then, in \S\ref{sec:iterative_regularization}, we propose a parameter update algorithm that ensures the convergence of the social state $x(t)$ derived by \eqref{eq:closed_loop} to the Nash equilibrium set. Fig.~\ref{fig:closed_loop} illustrates our framework consisting of \eqref{eq:closed_loop} and a parameter update algorithm for $\theta = (\theta^1, \cdots, \theta^M)$.

\subsection{Preliminary Convergence Analysis} \label{sec:preliminary_convergence_results}

Our analysis hinges on the passivity technique developed in \cite{Fox2013Population-Game, Park2019Payoff-Dynamic-, 9029756}. We begin by reviewing two notions of passivity -- \textit{weak $\delta$-antipassivity} and \textit{$\delta$-passivity} -- adopted for \eqref{eq:pdm} and \eqref{eq:edm}, respectively. We then establish the stability of the closed-loop model \eqref{eq:closed_loop}.

\begin{definition} [Weak $\delta$-Antipassivity with Deficit $\bar \nu$ \cite{9029756}] \label{def:weak_delta_passivity}
  The PDM \eqref{eq:pdm} is \textit{weakly $\delta$-antipassive with deficit $\bar \nu$} if \textbf{(i)} there exists a positive and bounded function $\alpha_{x,p}: \mathbb R_{\geq 0} \to \mathbb R_{\geq 0}$ such that
  \begin{align} \label{eq:weakAntipassivity}
    \alpha_{x,p} (t_0) \!\geq\! \int_{t_0}^t \!( \dot x'\!(\tau) \dot p(\tau) \!-\! \nu \dot x'\!(\tau) \dot x(\tau) ) \, \mathrm d \tau, \,\forall t \!\geq\! t_0 \!\geq\! 0
  \end{align}
  holds for every social state trajectory $x(t), \, t \geq 0$ and for every nonnegative constant $\nu \!>\! \bar \nu$, and \textbf{(ii)} $\lim_{t \to \infty} \alpha_{x,p}(t) \!=\! 0$ holds whenever $\lim_{t \to \infty} \|\dot x(t)\|_2 \!=\! 0.$\footnote{We use the subscript to indicate that the function $\alpha_{x,p}$ depends on the trajectories $x(t), \, t \geq 0$ and $p(t), \, t \geq 0$. We note that the requirement $\lim_{t \to \infty} \|\dot x(t)\|_2 = 0 \implies \lim_{t \to \infty} \alpha_{x,p}(t) = 0$ is an addition to the original definition given in \cite{9029756}. Hence, when the PDM satisfies Definition~\ref{def:weak_delta_passivity}, it also satisfies the original definition from \cite{9029756}.} The payoff vector trajectory $p(t), \, t \geq 0$ in \eqref{eq:weakAntipassivity} is derived by \eqref{eq:pdm} and given $x(t), \, t \geq 0$.
\end{definition}

The constant $\bar \nu$ is a measure of the passivity deficit in \eqref{eq:pdm}. Viewing \eqref{eq:pdm} as a dynamical system, according to \cite{willems_dissipative_1972}, the function $\alpha_{x,p}$ is an estimate of the \textit{stored energy} of \eqref{eq:pdm}.

In the following lemmas, we establish the weak $\delta$-antipassivity of the payoff function with a time-dependent delay \eqref{eq:population_game_time_delay} and the smoothing PDM \eqref{eq:smoothing_pdm}. The proofs of the lemmas are provided in Appendix~\ref{proof_lemma:antipassivity}.
\begin{lemma} \label{lemma:antipassivity_delayed_pdm}
  The payoff function with a time-dependent delay \eqref{eq:population_game_time_delay} is weakly $\delta$-antipassive with a positive deficit $\bar \nu = \frac{B_{D \mathcal F}(2+B_{\dot d})}{2}$, where $B_{D \mathcal F}$ is the upper bound on $D\mathcal F$ as defined in Assumption~\ref{assumption:payoff_function} and $B_{\dot d}$ is defined such that $\dot d(t) \in [-B_{\dot d}, 1)$ holds. Additionally, the function $\alpha_{x,p}$ for \eqref{eq:weakAntipassivity} is defined as
  \begin{align} \label{eq:alpha_delayed_pdm}
    \alpha_{x,p} (t_0) = \frac{B_{D\mathcal F}(1+B_{\dot d})}{2} \int_{t_0-B_d}^{t_0} \| \dot{x} (\tau) \|_2^2 \, \mathrm d\tau,
  \end{align}
  where $B_d$ is the upper bound of $d(t)$, i.e., $d(t) \in [0, B_d]$.
\end{lemma}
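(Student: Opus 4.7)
The plan is to verify the integral inequality (26) directly from the definition $p(t) = \mathcal{F}(x(t-d))$. First I would differentiate to obtain $\dot p(\tau) = D\mathcal{F}(x(\tau-d))\,\dot x(\tau-d)$. Then, using Cauchy--Schwarz together with the bound $\|D\mathcal{F}\|_2 \leq B_{D\mathcal{F}}$ from Assumption~1, followed by Young's inequality $ab \leq \tfrac{1}{2}(a^2+b^2)$, I would derive the pointwise estimate
\begin{equation*}
\dot x^T(\tau)\,\dot p(\tau) \;\leq\; \frac{B_{D\mathcal{F}}}{2}\bigl(\|\dot x(\tau)\|_2^2 + \|\dot x(\tau-d)\|_2^2\bigr).
\end{equation*}

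Next I would integrate this bound over $[t_0,t]$ and apply the change of variables $\sigma = \tau-d$ to the delayed term, yielding $\int_{t_0-d}^{t-d}\|\dot x(\sigma)\|_2^2\,d\sigma$. The key step is to split this shifted integration window. Whether $t-d \leq t_0$ or $t-d > t_0$, nonnegativity of the integrand gives
\begin{equation*}
\int_{t_0-d}^{t-d}\!\|\dot x(\sigma)\|_2^2\,d\sigma \;\leq\; \int_{t_0-d}^{t_0}\!\|\dot x(\sigma)\|_2^2\,d\sigma + \int_{t_0}^{t}\!\|\dot x(\sigma)\|_2^2\,d\sigma.
\end{equation*}
Combining the pieces produces
\begin{equation*}
\int_{t_0}^{t}\dot x^T(\tau)\dot p(\tau)\,d\tau \;\leq\; B_{D\mathcal{F}}\!\int_{t_0}^{t}\!\|\dot x(\tau)\|_2^2\,d\tau + \alpha_{x,p}(t_0),
\end{equation*}
with $\alpha_{x,p}(t_0)$ exactly as in \eqref{eq:alpha_delayed_pdm}. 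Rearranging and absorbing the first term into the supply rate $-\nu \dot x^T\dot x$ whenever $\nu > \nu^\ast := B_{D\mathcal{F}}$ yields (26).

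Finally, I would verify the structural requirements on $\alpha_{x,p}$: positivity and boundedness follow since $\dot x$ stays bounded on bounded windows (using Assumption~2-3 through the closed-loop structure), and the convergence claim $\lim_{t\to\infty}\alpha_{x,p}(t) = 0$ whenever $\|\dot x(t)\|_2 \to 0$ is immediate, because for any $\epsilon>0$ one picks $T$ with $\|\dot x(\tau)\|_2<\epsilon$ for $\tau\geq T$, so $\alpha_{x,p}(t_0) \leq \tfrac{B_{D\mathcal{F}}}{2}\,d\,\epsilon^2$ once $t_0 > T+d$.

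The main obstacle is really bookkeeping rather than substance: carefully handling the shifted integration range so that the "excess" mass of $\|\dot x\|_2^2$ outside $[t_0,t]$ lands precisely on $[t_0-d,t_0]$ to match the proposed storage function, and tracking the constants so that Young's inequality delivers exactly $\nu^\ast = B_{D\mathcal{F}}$ rather than a suboptimal value. Once that split is set up cleanly, the remaining steps are routine bounds.
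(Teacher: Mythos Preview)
Your proposal is correct and follows essentially the same route as the paper: the paper likewise writes $\dot p(\tau)=D\mathcal F(x(\tau-d))\dot x(\tau-d)$, applies a weighted Young inequality (equivalent to your Cauchy--Schwarz plus $ab\le\tfrac12(a^2+b^2)$) to obtain the pointwise bound $\tfrac{B_{D\mathcal F}}{2}\bigl(\|\dot x(\tau)\|_2^2+\|\dot x(\tau-d)\|_2^2\bigr)$, and then shifts the delayed integral exactly as you describe. Your explicit handling of the interval split and of the tail condition $\alpha_{x,p}(t)\to 0$ is slightly more detailed than the paper's write-up, but the argument is the same.
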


Note that by \eqref{eq:alpha_delayed_pdm}, it holds that $\lim_{t \to \infty} \|\dot x(t)\|_2 = 0$ implies $\lim_{t \to \infty} \alpha_{x,p}(t) = 0$.

\begin{lemma} \label{lemma:antipassivity_smoothing_pdm}
  Consider the smoothing PDM \eqref{eq:smoothing_pdm} where its underlying payoff function is contractive and defined by an affine mapping $\mathcal F(x) = F x + b$. The smoothing PDM is weakly $\delta$-antipassive with a positive deficit $\bar \nu$ given by $\bar \nu = \frac{1}{4} \| F - F' \|_2$,
  and $\alpha_{x,p} (t_0)$ in \eqref{eq:weakAntipassivity} is defined as $\alpha_{x,p} (t_0) = \sqrt{n} \| p(t_0) - F x(t_0) - b \|_2,$
  where $n = \sum_{k=1}^M n^k$. 
\end{lemma}

Since \eqref{eq:smoothing_pdm} satisfies Assumption~\ref{assumption:pdm}-1, $\lim_{t \to \infty} \|\dot x(t)\|_2 =0$ implies $\lim_{t \to \infty} \alpha_{x,p}(t) = 0$.
If $F$ is symmetric, then $\bar \nu = 0$ and the smoothing PDM \eqref{eq:smoothing_pdm} becomes weakly $\delta$-antipassive without any deficit ($\bar \nu = 0$), which coincides with \cite[Proposition~7-i]{Park2019Payoff-Dynamic-}. Hence,  Lemma~\ref{lemma:antipassivity_smoothing_pdm} extends \cite[Proposition~7-i]{Park2019Payoff-Dynamic-} to the case where $F$ is non-symmetric. 

\begin{definition} [$\delta$-Passivity with Surplus $\bar \eta$ \cite{9029756}] \label{def:delta_passivity}
  The EDM \eqref{eq:edm} is \textit{$\delta$-passive with surplus $\bar \eta$} if there is a continuously differentiable function $\mathcal S: \mathbb X \times \mathbb R^n \to \mathbb R_{\geq 0}$ for which 
  \begin{multline} \label{eq:delta_passivity}
    \mathcal S \left( x(t), p(t) \right) - \mathcal S \left( x(t_0), p(t_0) \right) \leq  \\
    \int_{t_0}^t \left( \dot x'(\tau) \dot p(\tau) - \eta \, \dot x'(\tau) \dot x(\tau) \right) \, \mathrm d \tau, ~ \forall t \geq t_0 \geq 0
  \end{multline}
  holds for every payoff vector trajectory $p(t), ~ t \geq 0$ and for every nonnegative constant $\eta < \bar \eta$, where the social state trajectory $x(t), ~ t \geq 0$ is determined by \eqref{eq:edm} and given $p(t), ~ t \geq 0$. We refer to $\mathcal S$ as the \textit{$\delta$-storage function}. 
  We call $\mathcal S$ \textit{informative} if the function satisfies
  \begin{align*}
    \mathcal S \left( z, r  \right) \!=\! 0 \iff
    \mathcal V \left( z, r \right) \!=\! 0 \iff
    \nabla_z' \mathcal S \left( z, r \right) \!\mathcal V \left( z, r \right) \!=\! 0,
  \end{align*}
  where $\mathcal V = \left( \mathcal V^1, \cdots, \mathcal V^M \right)$ denotes the vector field of \eqref{eq:edm} defining $\dot x^k(t) = \mathcal V^k(x^k(t), p^k(t))$.
\end{definition}

The constant $\bar \eta$ is a measure of passivity surplus in \eqref{eq:edm}. Compared to weak $\delta$-antipassivity defined for \eqref{eq:pdm}, Definition~\ref{def:delta_passivity} states a stronger notion of passivity since it requires the existence of the $\delta$-storage function $\mathcal S$.

In the following lemma, we establish $\delta$-passivity of the KLD-RL EDM \eqref{eq:closed_loop_a}.
\begin{lemma} \label{lemma:regLogitPassivity}
  Given fixed weight $\eta > 0$ and regularization parameter $\theta \in \mathrm{int} (\mathbb X)$, the KLD-RL EDM \eqref{eq:closed_loop_a} is $\delta$-passive with surplus $\eta$ and has an informative $\delta$-storage function $\mathcal S_{\theta}: \mathbb X \times \mathbb R^n \to \mathbb R_{\geq 0}$ expressed as\footnote{We use the subscript to specify the dependency of $\mathcal S_{\theta}$ on $\theta$. Also, as we discussed in the beginning of  \S\ref{sec:regularizedLogit}, there is a unique solution, given as in  \eqref{eq:kld_rl}, for the maximization in \eqref{eq:DeltaStorageFunc}.}
  \begin{multline} \label{eq:DeltaStorageFunc}
    \mathcal S_{\theta} ( z, r ) \!=\!
    \max_{{\bar z} \in \mathrm{int}(\mathbb X)} ( \bar z' r \!-\! \eta \mathcal D ( \bar z \| \theta ) ) 
    \!-\! ( z' r \!-\! \eta \mathcal D ( z \| \theta ) ), 
  \end{multline}
  where $z = (z^1\!,\!\cdots\!,\!z^M)$, $r = (r^1\!,\!\cdots\!,\!r^M)$, and $\theta = (\theta^1\!,\!\cdots\!,\!\theta^M)$.
\end{lemma}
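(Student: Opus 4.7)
The plan is to verify the four requirements of Definition~\ref{def:delta_passivity}---nonnegativity of $\mathcal S_\theta$, $C^1$ smoothness, the dissipation inequality with surplus $\eta$, and informativeness---by exploiting the variational structure of $\mathcal S_\theta$. Writing $F(z,r) = z^T r - \eta \mathcal D(z \| \theta)$, so that $\mathcal S_\theta(z,r) = \max_{\bar z \in \mathrm{int}(\mathbb X)} F(\bar z, r) - F(z,r)$, the unique maximizer $\bar z^*(r)$ is the closed-form softmax \eqref{eq:kld_rl}; hence $\mathcal V(z,r) = \bar z^*(r) - z$, the map $\mathcal S_\theta$ is smooth on $\mathrm{int}(\mathbb X)\times\mathbb R^n$, and $\mathcal S_\theta \ge 0$ with equality iff $z = \bar z^*(r)$, i.e., iff $\mathcal V(z,r) = 0$.

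The second step differentiates $\mathcal S_\theta$ along solutions of \eqref{eq:closed_loop_a}. By the envelope theorem, $\nabla_r \mathcal S_\theta(z,r) = \bar z^*(r) - z$, and on the tangent space $T\mathbb X$ the $z$-gradient acts as $-r + \eta \nabla \mathcal D(z\|\theta)$ (any population-wise constant component from the true gradient will drop out against $\dot x \in T\mathbb X$). Substituting $\dot x = \bar z^*(p) - x$ yields
\begin{equation*}
\dot{\mathcal S}_\theta = \dot x^T \dot p + \dot x^T\bigl(-p + \eta \nabla \mathcal D(x\|\theta)\bigr),
\end{equation*}
so that \eqref{eq:delta_passivity} reduces to the pointwise inequality $\dot x^T (p - \eta \nabla \mathcal D(x\|\theta)) \ge \eta \|\dot x\|_2^2$.

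The crux is the third step, which leverages two simplex-specific facts. First, the KKT conditions for $\bar z^*(p)$---only the equality constraint $\sum_i \bar z_i^k = 1$ is active since $\bar z^*(p) \in \mathrm{int}(\mathbb X)$---force $p - \eta \nabla \mathcal D(\bar z^*(p)\|\theta)$ to be constant on each population, so its inner product with $\dot x$ vanishes. This rewrites the target inequality as
\begin{equation*}
\eta\,\dot x^T\bigl(\nabla \mathcal D(\bar z^*(p)\|\theta) - \nabla \mathcal D(x\|\theta)\bigr) \ge \eta \|\dot x\|_2^2.
\end{equation*}
Second, $\mathcal D(\cdot\|\theta)$ is $1$-strongly convex on $\mathbb X$ in Euclidean norm: its Hessian is $\mathrm{diag}(1/z_i^k)$, and $0 < z_i^k \le 1$ on the unit simplex gives $1/z_i^k \ge 1$, so $(\nabla \mathcal D(y\|\theta) - \nabla \mathcal D(x\|\theta))^T(y-x) \ge \|y-x\|_2^2$. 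Applied at $y = \bar z^*(p)$ this closes the estimate at exactly the constant $\eta$, proving $\delta$-passivity with surplus $\eta$.

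Informativeness is then a byproduct: $\mathcal S_\theta = 0 \iff \mathcal V = 0$ was already observed, and the pointwise bound $-\nabla_z \mathcal S_\theta^T \mathcal V \ge \eta \|\mathcal V\|_2^2$ derived above shows $\nabla_z \mathcal S_\theta^T \mathcal V = 0$ forces $\mathcal V = 0$, with the converse immediate. I expect the main obstacle to be the third step's double use of the mass constraint $\sum_i z_i^k = 1$: it simultaneously eliminates the Lagrange multiplier against $\dot x \in T\mathbb X$ and yields $z_i^k \le 1$ so that the Kullback-Leibler Hessian dominates the identity. The reason this is delicate is that the two usages together are precisely what make the passivity surplus constant equal the regularization weight~$\eta$, and without that exact match the later interconnection argument with a deficit-$\nu^\ast$ PDM would not close.
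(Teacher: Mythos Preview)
Your proof is correct. It differs from the paper's only in that the paper does not argue directly: it simply observes that, with $\theta$ fixed, \eqref{eq:closed_loop_a} is a perturbed best response EDM in the sense of \cite[Section~VIII]{Park2019Payoff-Dynamic-} with admissible perturbation $\mathcal Q^k(z^k)=\eta\,\mathcal D(z^k\|\theta^k)$, checks that this $\mathcal Q^k$ meets the two admissibility conditions (gradient blow-up at $\mathrm{bd}(\mathbb X^k)$ and Hessian $\ge \eta I$ on $T\mathbb X^k$), and then invokes \cite[Proposition~5]{Park2019Payoff-Dynamic-} to obtain both $\delta$-passivity with surplus $\eta$ and informativeness of \eqref{eq:DeltaStorageFunc}. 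Your argument is in effect a self-contained reproof of that proposition specialized to the KL perturbation: the envelope step for $\nabla_r\mathcal S_\theta$, the KKT elimination of the Lagrange multiplier against $\dot x\in T\mathbb X$, and the $1$-strong convexity of $\mathcal D(\cdot\|\theta)$ on the unit simplex (because $z_i^k\le 1$ forces $\nabla^2\mathcal D=\mathrm{diag}(1/z_i^k)\ge I$). What your route buys is transparency---the reader sees exactly where the surplus constant $\eta$ comes from and why the mass normalization $m^k=1$ is needed---at the cost of length; the paper's route buys brevity by outsourcing the mechanism to the companion reference.
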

  \begin{proof}
    Noting that \eqref{eq:closed_loop_a} belongs to the class of the perturbed best response EDMs \cite[\S VIII]{Park2019Payoff-Dynamic-}, the proof follows from \cite[Proposition~5]{Park2019Payoff-Dynamic-}. \QED
  \end{proof}

  Let $(\bar x, \bar p)$ be a stationary point of \eqref{eq:closed_loop}. By Assumption~\ref{assumption:pdm}, it holds that $\bar p = \mathcal{F}(\bar x)$. Additionally, according to \eqref{eq:maximization} and \cite[Lemma~A.1]{HOFBAUER200747}, we have that $\mathcal F(\bar x) - \eta \nabla \mathcal D (\bar x \,\|\, \theta) = c \mathbf 1$, where $c$ is a constant and $\mathbf 1 \in \mathbb R^n$ is a vector with all elements equal to 1. Consequently, the following relation can be established:
  \begin{align} 
    \max_{z \in \mathbb{X}} (z - \bar x)' (\mathcal{F}(\bar x) - \eta \nabla \mathcal{D}(\bar x \,\|\, \theta)) = 0. \label{eq:stationary_point_conditions_a}
  \end{align}
Following a similar argument as in \cite{HOFBAUER200747}, $\bar x$ is the Nash equilibrium of the \textit{virtual payoff function} $\tilde{\mathcal{F}}_{\eta, \theta}$, defined as:
\begin{align} \label{eq:virtual_payoff}
  \tilde{\mathcal{F}}_{\eta, \theta}(z) = \mathcal{F}(z) - \eta \nabla \mathcal{D}(z \,\|\, \theta).
\end{align}
The state $\bar x$ is often referred to as the \textit{perturbed} Nash equilibrium of $\mathcal{F}$. Let $\mathbb{PNE}_{\eta, \theta}(\mathcal{F})$ be the set of all perturbed Nash equilibria of $\mathcal{F}$, formally defined as follows.
\begin{definition} \label{def:perturbed_nash}
  Given $\eta > 0$ and $\theta \in \mathrm{int}(\mathbb{X})$, define the set $\mathbb{PNE}_{\eta, \theta}(\mathcal{F})$ of the perturbed Nash equilibria of $\mathcal{F}$ as:
  \begin{align} \label{eq:perturbed_nash}
    \mathbb{PNE}_{\eta, \theta}(\mathcal{F}) = \{ \bar x \in \mathbb{X} \,|\, (\bar x \!-\! z)' \tilde{\mathcal{F}}_{\eta, \theta}(\bar x) \geq 0, \forall z \in \mathbb{X} \},
  \end{align}
  where $\tilde{\mathcal{F}}_{\eta, \theta}$ is the virtual payoff function associated with $\mathcal{F}$.
\end{definition}

\begin{proposition} \label{proposition:convergence_to_PE}
  Consider the closed-loop model \eqref{eq:closed_loop} consisting of the KLD-RL EDM \eqref{eq:closed_loop_a} and PDM \eqref{eq:closed_loop_b}. Suppose that \eqref{eq:closed_loop_b} is weakly $\delta$-antipassive with a deficit of $\bar \nu$, and the weight $\eta$ in \eqref{eq:closed_loop_a} satisfies $\eta > \bar \nu$. Then, the social state $x(t)$ governed by \eqref{eq:closed_loop} converges to the set $\mathbb{PNE}_{\eta, \theta} (\mathcal F)$:
  \begin{align} \label{eq:social_state_convergence}
    \lim_{t \to \infty} \inf_{z \in \mathbb{PNE}_{\eta, \theta} (\mathcal F)} \|x(t) - z\|_2 = 0.
  \end{align}
\end{proposition}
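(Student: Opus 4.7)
The plan is a standard passivity interconnection argument combined with Barbalat's lemma. First, I would exploit the hypothesis $\eta>\nu^{\ast}$ to pick constants $\nu,\eta'$ with $\nu^{\ast}<\nu<\eta'<\eta$. Applying Definition~\ref{def:delta_passivity} to \eqref{eq:closed_loop_a} with surplus parameter $\eta'$ and Definition~\ref{def:weak_delta_passivity} to \eqref{eq:closed_loop_b} with deficit parameter $\nu$, and subtracting, the cross-term $\int_{t_0}^{t}\dot x^{T}(\tau)\dot p(\tau)\,\mathrm d\tau$ cancels and one obtains the key energy estimate
\begin{align*}
\mathcal S_{\theta}\!\left(x(t),p(t)\right)+(\eta'-\nu)\!\int_{t_0}^{t}\!\|\dot x(\tau)\|_{2}^{2}\,\mathrm d\tau
\leq \mathcal S_{\theta}\!\left(x(t_0),p(t_0)\right)+\alpha_{x,p}(t_0).
\end{align*}
Setting $t_0=0$ and using the boundedness of $\alpha_{x,p}(0)$ together with $\mathcal S_{\theta}\geq 0$, this immediately gives both uniform boundedness of $\mathcal S_{\theta}(x(t),p(t))$ and square-integrability $\int_{0}^{\infty}\|\dot x(\tau)\|_{2}^{2}\,\mathrm d\tau<\infty$.

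Next, I would upgrade square-integrability to $\lim_{t\to\infty}\|\dot x(t)\|_{2}=0$ via Barbalat's lemma. This requires uniform continuity of $\dot x$. Differentiating \eqref{eq:closed_loop_a} yields
\begin{align*}
\ddot x_i^k(t)=\sum_{j=1}^{n^k}\partial_{r_j^k}\mathcal T_i^{\text{\scriptsize KLD-RL}}(\theta^k,p^k(t))\,\dot p_j^k(t)-\dot x_i^k(t),
\end{align*}
which is bounded because $p(t)$, $\dot p(t)$ are bounded by Assumption~\ref{assumption:pdm}-3, $x(t)\in\mathbb X$ is confined to a compact set, and $\mathcal T^{\text{\scriptsize KLD-RL}}$ is smooth in its second argument with $\theta\in\mathrm{int}(\mathbb X)$ fixed. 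Hence $\dot x$ is Lipschitz, hence uniformly continuous, and Barbalat applies.

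Finally, I would identify the limit set with $\mathbb{PNE}_{\eta,\theta}(\mathcal F)$. Once $\|\dot x(t)\|_{2}\to 0$, Assumption~\ref{assumption:pdm}-\ref{assumption:stationary_model} gives $\|p(t)-\mathcal F(x(t))\|_{2}\to 0$, while the form of \eqref{eq:closed_loop_a} gives $x(t)-\mathcal T^{\text{\scriptsize KLD-RL}}(\theta,p(t))\to 0$. Combining these with continuity of $\mathcal T^{\text{\scriptsize KLD-RL}}$ and $\mathcal F$, any accumulation point $\bar x$ of $x(t)$ (which exists by compactness of $\mathbb X$) satisfies $\bar x=\mathcal T^{\text{\scriptsize KLD-RL}}(\theta,\mathcal F(\bar x))$, i.e., $\bar x$ attains the maximum in \eqref{eq:maximization} at payoff $\mathcal F(\bar x)$. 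Invoking the first-order optimality characterization used to derive \eqref{eq:stationary_point_conditions_a} (cf.\ \cite[Lemma~A.1]{HOFBAUER200747}) and the definition of the virtual payoff \eqref{eq:virtual_payoff} then places $\bar x$ in $\mathbb{PNE}_{\eta,\theta}(\mathcal F)$. Since this set is closed and every subsequential limit of $x(t)$ lies in it, the set-distance convergence \eqref{eq:social_state_convergence} follows.

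I anticipate the main obstacle is the Barbalat step: the passivity inequalities only deliver the integral bound, and turning this into pointwise decay of $\|\dot x(t)\|_{2}$ requires the uniform-continuity verification sketched above, which in turn relies critically on Assumption~\ref{assumption:pdm}-3 to bound $\dot p$. A secondary subtlety is that the informative property of $\mathcal S_{\theta}$ supplied by Lemma~\ref{lemma:regLogitPassivity} is used only implicitly: it guarantees that the stationary conditions $\dot x\equiv 0$ are equivalent to the fixed-point equation characterizing $\mathbb{PNE}_{\eta,\theta}(\mathcal F)$, so that the final identification step is well-posed even when $\mathbb{PNE}_{\eta,\theta}(\mathcal F)$ is not a singleton.
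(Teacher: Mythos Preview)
Your proposal is correct and takes essentially the same approach as the paper: the paper's proof simply invokes \cite[Lemma~1]{Park2019Payoff-Dynamic-} and notes that its arguments extend under Assumption~\ref{assumption:pdm} to the possibly infinite-dimensional setting, and what you have written is precisely a self-contained reconstruction of that passivity-interconnection-plus-Barbalat argument. The only cosmetic difference is that you spell out the energy estimate, the uniform-continuity verification for $\dot x$, and the limit-set identification explicitly, whereas the paper absorbs all of this into the cited lemma.
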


  \begin{proof}
    The proof follows from the technical arguments used in the proof of \cite[Lemma~1]{Park2019Payoff-Dynamic-}, provided that $\eta > \bar \nu$. The original application of \cite[Lemma~1]{Park2019Payoff-Dynamic-} to closed-loop models, where the PDM is defined by \eqref{eq:pdm_finite_dimensional}, also applies here given that \eqref{eq:closed_loop} is well-defined and has a unique solution under Assumption~\ref{assumption:pdm}. \QED
  \end{proof}

Proposition~\ref{proposition:convergence_to_PE} implies that if the surplus of passivity in \eqref{eq:closed_loop_a} exceeds the lack of passivity in \eqref{eq:closed_loop_b},
the social state derived by \eqref{eq:closed_loop} converges to the perturbed Nash equilibrium set \eqref{eq:perturbed_nash}. Consequently, using Lemmas~\ref{lemma:antipassivity_delayed_pdm} and \ref{lemma:antipassivity_smoothing_pdm}, along with Proposition~\ref{proposition:convergence_to_PE}, we can establish convergence to the perturbed Nash equilibrium set for the payoff function with a time-dependent delay \eqref{eq:population_game_time_delay} and the smoothing PDM \eqref{eq:smoothing_pdm}.

\change{
\begin{corollary} \label{corollary:payoff_function_time_delay}
  Suppose the PDM \eqref{eq:closed_loop_b} of the closed-loop model \eqref{eq:closed_loop} is defined by \eqref{eq:population_game_time_delay}. The social state trajectory $x(t), ~ t \geq 0$, determined by \eqref{eq:closed_loop}, converges to $\mathbb{PNE}_{\eta, \theta} (\mathcal F)$ if $\eta > \frac{B_{D \mathcal F}(2+B_{\dot d})}{2}$, where $B_{D \mathcal F}$ is the upper bound on $D\mathcal F$ as defined in Assumption~\ref{assumption:payoff_function} and $B_{\dot d}$ is defined such that $\dot d(t) \in [-B_{\dot d}, 1)$ holds.
\end{corollary}
}

\begin{corollary} \label{corollary:smoothing_pdm}
  Suppose the PDM \eqref{eq:closed_loop_b} of the closed-loop model \eqref{eq:closed_loop} is defined by \eqref{eq:smoothing_pdm}, for which the underlying game is contractive and has an affine payoff function $\mathcal F(x) = Fx + b$. The social state trajectory $x(t), ~ t \geq 0$, determined by \eqref{eq:closed_loop}, converges to $\mathbb{PNE}_{\eta, \theta} (\mathcal F)$ if $\eta > \frac{1}{4} \left\| F - F' \right\|_2$.
\end{corollary}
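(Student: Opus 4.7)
The plan is to observe that Corollary~\ref{corollary:smoothing_pdm} is an immediate composition of the two previously established results Lemma~\ref{lemma:antipassivity_smoothing_pdm} and Proposition~\ref{proposition:convergence_to_PE}, so the proof reduces to verifying that the hypotheses of Proposition~\ref{proposition:convergence_to_PE} hold under the stated assumption on $\eta$.

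First, I would invoke Lemma~\ref{lemma:antipassivity_smoothing_pdm}. The PDM in the corollary is, by hypothesis, the smoothing PDM \eqref{eq:smoothing_pdm} with affine contractive payoff $\mathcal F(x) = Fx + b$; this matches the hypothesis of the lemma verbatim. The lemma then provides weak $\delta$-antipassivity of \eqref{eq:closed_loop_b} with deficit $\nu^\ast = \tfrac{1}{4}\|F - F^T\|_2$ and with the associated function $\alpha_{x,p}(t_0) = \sqrt{n}\,\|p(t_0) - Fx(t_0) - b\|_2$. I would briefly note that, because \eqref{eq:smoothing_pdm} satisfies Assumption~\ref{assumption:pdm}-\ref{assumption:stationary_model} (as stated in the paragraph following Lemma~\ref{lemma:antipassivity_smoothing_pdm}), the requirement $\lim_{t\to\infty}\|\dot x(t)\|_2 = 0 \Rightarrow \lim_{t\to\infty}\alpha_{x,p}(t) = 0$ in Definition~\ref{def:weak_delta_passivity} is met, so all premises of Proposition~\ref{proposition:convergence_to_PE} pertaining to the PDM are satisfied.

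Next, I would check the condition linking the EDM surplus with the PDM deficit. Lemma~\ref{lemma:regLogitPassivity} already guarantees that \eqref{eq:closed_loop_a} is $\delta$-passive with surplus $\eta$ for any fixed regularization parameter $\theta \in \mathrm{int}(\mathbb X)$, via the informative $\delta$-storage function $\mathcal S_\theta$ in \eqref{eq:DeltaStorageFunc}. The hypothesis $\eta > \tfrac{1}{4}\|F - F^T\|_2$ of the corollary is exactly the inequality $\eta > \nu^\ast$ required by Proposition~\ref{proposition:convergence_to_PE}. Thus the surplus of passivity of the KLD-RL EDM strictly exceeds the passivity deficit of the smoothing PDM.

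Finally, I would apply Proposition~\ref{proposition:convergence_to_PE} directly to conclude that the social state trajectory $x(t), \, t \geq 0$ generated by \eqref{eq:closed_loop} satisfies $\lim_{t \to \infty} \inf_{z \in \mathbb{PNE}_{\eta,\theta}(\mathcal F)} \|x(t) - z\|_2 = 0$, establishing the claim. Because each step is a direct invocation of an already-stated result, there is no substantive obstacle; the only subtlety worth flagging is the reminder that the model is infinite-dimensional only in the delayed-payoff case (Corollary~\ref{corollary:payoff_function_time_delay}), whereas here \eqref{eq:smoothing_pdm} is an ordinary differential equation, so Proposition~\ref{proposition:convergence_to_PE} applies in its most straightforward form.
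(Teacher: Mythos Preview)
Your proposal is correct and follows exactly the route the paper takes: the paper states the corollary as an immediate consequence of Lemma~\ref{lemma:antipassivity_smoothing_pdm} and Proposition~\ref{proposition:convergence_to_PE}, and you have simply spelled out that composition in full. The only minor remark is that the corollary's statement omits the word ``contractive'' while Lemma~\ref{lemma:antipassivity_smoothing_pdm} requires it; you correctly supplied that hypothesis when invoking the lemma, consistent with the paper's standing context.
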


\subsection{Iterative KLD Regularization and Convergence Guarantee} \label{sec:iterative_regularization}
Suppose the population game $\mathcal F$ underlying the PDM \eqref{eq:pdm} has a Nash equilibrium $x^{\tiny \text{NE}}$ in $\mathrm{int}(\mathbb X)$. If $\theta$ coincides with $x^{\tiny \text{NE}}$, then $\mathbb{PNE}_{\eta,\theta} (\mathcal F) = \{ x^{\tiny \text{NE}} \}$ and under the conditions of Proposition~\ref{proposition:convergence_to_PE}, the social state $x(t)$ converges to $x^{\tiny \text{NE}}$. Therefore, achieving convergence to the Nash equilibrium set requires $\theta = x^{\tiny \text{NE}}$. In this section, we discuss the design of a parameter update algorithm that specifies how the agents update $\theta$ to asymptotically attain the Nash equilibrium.

Let the social state $x(t)$ evolve according to \eqref{eq:closed_loop} and the regularization parameter $\theta$ be iteratively updated at each time instant of a discrete-time sequence $\{t_l\}_{l=1}^\infty$ as $\theta = x(t_l) \in \mathrm{int}(\mathbb X)$. That is, $\theta$ is reset to the current social state $x(t_l)$ at each time $t_l$ and maintains its value until the next update time $t_{l+1}$.
Let $\{\theta_l\}_{l=1}^\infty$ be the resulting sequence of parameter updates, i.e., $\theta_l = x(t_l)$.
Suppose the following two conditions hold:
\begin{subequations} \label{eq:parameter_update_conditions}
  \begin{align}
    &\max_{z \in \mathbb X} ( z \!-\! \theta_{l+1} )' ( \mathcal F(\theta_{l+1})
      \!-\! \eta \nabla \mathcal D ( \theta_{l+1} \| \theta_l ) ) \leq \frac{\eta}{2} \mathcal D ( \theta_{l+1} \| \theta_l )  \label{eq:parameter_update_conditions_a} \\
    &\lim_{l \to \infty} \mathcal D ( \theta_{l+1} \| \theta_l ) = 0 \nonumber  \\
    &\implies \!\lim_{l \to \infty} \!\alpha_{x,p} (t_l) \!=\! 0 \text{ and } \lim_{l \to \infty} \!\| p(t_l) \!-\! \mathcal F(\theta_l) \|_2 \!=\! 0, \label{eq:parameter_update_conditions_b}
  \end{align}
\end{subequations}
where $\alpha_{x,p}$ is the function defined in \eqref{eq:weakAntipassivity} for the PDM \eqref{eq:closed_loop_b}. According to \eqref{eq:stationary_point_conditions_a} and \eqref{eq:perturbed_nash}, 
the condition \eqref{eq:parameter_update_conditions_a} means that $\theta$ is updated to a new value, i.e., $\theta_{l+1} = x(t_{l+1})$, when the state $x(t_{l+1})$ is sufficiently close to $\mathbb{PNE}_{\eta, \theta_l} (\mathcal F)$. The  condition \eqref{eq:parameter_update_conditions_b} implies that as the sequence $\{\theta_l\}_{l=1}^\infty$ converges, the estimated stored energy in the PDM \eqref{eq:closed_loop_b} dissipates and the payoff vector $p(t_l)$ converges to $\mathcal F(\theta_l)$.

The following lemma states the convergence of the social state $x(t)$ to the Nash equilibrium set $\mathbb{NE}(\mathcal F)$ if the sequences $\{ t_l \}_{l=1}^\infty$ and $\{ \theta_l \}_{l=1}^\infty$ satisfy \eqref{eq:parameter_update_conditions}. The proof of the lemma is given in Appendix~\ref{proof_lemma:stability}. 

\begin{lemma} \label{lemma:stability}
  Consider that the social state $x(t)$ and payoff vector $p(t)$ evolve according to  the closed-loop model \eqref{eq:closed_loop}, 
  the PDM \eqref{eq:closed_loop_b} is weakly $\delta$-antipassive with deficit $\bar \nu$, and the weight $\eta$ of the KLD-RL EDM \eqref{eq:closed_loop_a} satisfies $\eta > \bar \nu$. Suppose the parameter $\theta$ of \eqref{eq:closed_loop_a} is iteratively updated according to $\theta_l = x(t_l), \,l \in \mathbb N$ such that \eqref{eq:parameter_update_conditions} holds, and one of the following two conditions holds.\footnote{The condition \ref{stability_contractive_games}) implies that Nash equilibria of a contractive game $\mathcal F$ can be located at any location in $\mathbb X$ as long as at least one of them belongs to $\mathrm{int}(\mathbb X)$.
  }
  \begin{enumerate}
    \renewcommand{\theenumi}{(C\arabic{enumi}}
  \item $\mathcal F$ is contractive and has a Nash equilibrium in $\mathrm{int}(\mathbb X)$. \label{stability_contractive_games}
  \item $\mathcal F$ is strictly contractive. \label{stability_strictly_contractive_games}
  \end{enumerate}
  Then, the state $x(t)$ converges to the Nash equilibrium set:
  \begin{align}
    \lim_{t \to \infty} \inf_{z \in \mathbb{NE} (\mathcal F)} \|x(t) - z\|_2 = 0.
  \end{align}
\end{lemma}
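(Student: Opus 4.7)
My plan is to prove the lemma in two stages. In stage one, I establish that the discrete sequence $\{\theta_l\}_{l=1}^\infty$ converges to $\mathbb{NE}(\mathcal F)$ using $\mathcal D(x^{\text{NE}} \,\|\, \cdot)$ as a Lyapunov function, where under \ref{stability_contractive_games}) I fix any Nash equilibrium $x^{\text{NE}} \in \mathbb{NE}(\mathcal F) \cap \mathrm{int}(\mathbb X)$ and under \ref{stability_strictly_contractive_games}) I use the unique Nash equilibrium. In stage two, I extend from $x(t_l) = \theta_l \to \mathbb{NE}(\mathcal F)$ to $x(t) \to \mathbb{NE}(\mathcal F)$ for all $t$ by invoking Proposition~\ref{proposition:convergence_to_PE} on each interval $[t_l, t_{l+1}]$ and exploiting the informative $\delta$-storage $\mathcal S_{\theta_l}$ from Lemma~\ref{lemma:regLogitPassivity}. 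The central algebraic tool throughout is the three-point identity for the KL divergence,
\[ (a - b)^T \nabla \mathcal D(b \,\|\, c) = \mathcal D(a \,\|\, c) - \mathcal D(a \,\|\, b) - \mathcal D(b \,\|\, c), \]
which follows by expanding \eqref{eq:gradient_kld}.

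For stage one, I substitute $z = x^{\text{NE}}$ into \eqref{eq:parameter_update_conditions_a}; combining the Nash inequality \eqref{eq:nash_equilibrium} with contractivity \eqref{eq:contractive} yields $(x^{\text{NE}} - \theta_{l+1})^T \mathcal F(\theta_{l+1}) \geq 0$, and the three-point identity then produces the monotone decrease
\[ \mathcal D(x^{\text{NE}} \,\|\, \theta_l) - \mathcal D(x^{\text{NE}} \,\|\, \theta_{l+1}) \geq \tfrac{1}{2} \mathcal D(\theta_{l+1} \,\|\, \theta_l). \]
Hence $\{\mathcal D(x^{\text{NE}} \,\|\, \theta_l)\}$ is nonincreasing and convergent, $\{\theta_l\}$ is contained in a compact level set inside $\mathrm{int}(\mathbb X)$, and $\mathcal D(\theta_{l+1} \,\|\, \theta_l) \to 0$ (so $\|\theta_{l+1} - \theta_l\|_1 \to 0$ by Pinsker's inequality). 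Applying the three-point identity to \eqref{eq:parameter_update_conditions_a} for arbitrary $z \in \mathrm{int}(\mathbb X)$ recasts it as
\[ (z - \theta_{l+1})^T \mathcal F(\theta_{l+1}) + \eta \bigl( \mathcal D(z \,\|\, \theta_{l+1}) - \mathcal D(z \,\|\, \theta_l) \bigr) \leq -\tfrac{\eta}{2} \mathcal D(\theta_{l+1} \,\|\, \theta_l). \]
For any subsequential limit $\theta_{l_j} \to \theta^*$ (also $\theta_{l_j+1} \to \theta^*$), passing to the limit using continuity of $\mathcal D(z \,\|\, \cdot)$ on $\mathrm{int}(\mathbb X)$ and of $\mathcal F$ gives $(z - \theta^*)^T \mathcal F(\theta^*) \leq 0$ for $z \in \mathrm{int}(\mathbb X)$, and by continuity for all $z \in \mathbb X$, so $\theta^* \in \mathbb{NE}(\mathcal F)$. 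Under \ref{stability_strictly_contractive_games}) uniqueness forces the whole sequence to converge; under \ref{stability_contractive_games}) every accumulation point lies in $\mathbb{NE}(\mathcal F)$, yielding $\inf_{z \in \mathbb{NE}(\mathcal F)} \|\theta_l - z\|_2 \to 0$.

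For stage two, the implication \eqref{eq:parameter_update_conditions_b} combined with $\mathcal D(\theta_{l+1} \,\|\, \theta_l) \to 0$ yields $\alpha_{x,p}(t_l) \to 0$ and $\|p(t_l) - \mathcal F(\theta_l)\|_2 \to 0$. On each interval $[t_l, t_{l+1}]$ the parameter $\theta$ is held fixed at $\theta_l$, so the closed-loop \eqref{eq:closed_loop} satisfies the hypotheses of Proposition~\ref{proposition:convergence_to_PE} with $\eta > \nu^\ast$, and the informative $\delta$-storage $\mathcal S_{\theta_l}$ dissipates, pulling $x(t)$ toward $\mathbb{PNE}_{\eta, \theta_l}(\mathcal F)$. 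Because $\nabla \mathcal D(x^{\text{NE}} \,\|\, x^{\text{NE}}) = 0$ makes $\tilde{\mathcal F}_{\eta, x^{\text{NE}}} = \mathcal F$ on $\mathbb X$ and hence $x^{\text{NE}} \in \mathbb{PNE}_{\eta, x^{\text{NE}}}(\mathcal F)$, continuity of the perturbed Nash correspondence in $\theta$ at $x^{\text{NE}}$ forces $\mathbb{PNE}_{\eta, \theta_l}(\mathcal F) \to \mathbb{NE}(\mathcal F)$ as $\theta_l \to \mathbb{NE}(\mathcal F)$, carrying $x(t)$ along.

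The main obstacle I anticipate is making the last step rigorous: bounding $\inf_{z \in \mathbb{NE}(\mathcal F)} \|x(t) - z\|_2$ uniformly over $t$, not only at the discrete update times $t_l$, especially if the intervals $[t_l, t_{l+1}]$ grow unboundedly. I would address this by showing that on $[t_l, t_{l+1}]$ the net dissipation rate of $\mathcal S_{\theta_l}(x(t), p(t))$ is strictly negative whenever $x(t)$ is away from $\mathbb{PNE}_{\eta, \theta_l}(\mathcal F)$, since the $\delta$-passivity surplus $\eta$ strictly dominates the weak $\delta$-antipassivity deficit $\nu^\ast$; combined with informativeness of $\mathcal S_{\theta_l}$ (Lemma~\ref{lemma:regLogitPassivity}) and the proximity bound \eqref{eq:parameter_update_conditions_a} at update times, this traps $x(t)$ in a shrinking neighborhood of the limit equilibrium set.
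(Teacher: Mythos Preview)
Your Stage~I is essentially the paper's Part~I: the three-point identity, the telescoping bound $\mathcal D(x^{\text{NE}}\|\theta_{l+1})\le \mathcal D(x^{\text{NE}}\|\theta_l)-\tfrac12\mathcal D(\theta_{l+1}\|\theta_l)$, and the subsequential-limit argument all match. There is one gap, however, under \ref{stability_strictly_contractive_games}): you claim $\{\theta_l\}$ lies in a compact level set \emph{inside} $\mathrm{int}(\mathbb X)$, but if the unique Nash equilibrium $x^{\text{NE}}$ is on the boundary then $\mathcal D(x^{\text{NE}}\|\theta)$ can remain bounded while some $\theta_i^k\to 0$, so the level set need not be compact in $\mathrm{int}(\mathbb X)$ and your limit passage (which uses continuity of $\mathcal D(z\|\cdot)$ on $\mathrm{int}(\mathbb X)$) breaks. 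The paper handles this by letting $\bar\theta$ be a limit point in $\mathbb X$ (not $\mathrm{int}(\mathbb X)$), restricting the test vectors to the face $\mathbb X_{\bar\theta}=\{x\in\mathbb X: x_i^k=0 \text{ whenever } \bar\theta_i^k=0\}$, observing that every Nash equilibrium must lie in $\mathbb X_{\bar\theta}$ (else $\mathcal D(x^{\text{NE}}\|\theta_{l_m})\to\infty$, contradicting monotonicity), and then combining $(x^{\text{NE}}-\bar\theta)^T\mathcal F(\bar\theta)\le 0$ with $(x^{\text{NE}}-\bar\theta)^T\mathcal F(x^{\text{NE}})\ge 0$ and strict contractivity to force $\bar\theta=x^{\text{NE}}$.

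Your Stage~II takes a different route from the paper and is where the real work remains. You appeal to continuity of the perturbed-Nash correspondence $\theta\mapsto\mathbb{PNE}_{\eta,\theta}(\mathcal F)$ and to Proposition~\ref{proposition:convergence_to_PE} interval-by-interval; but Proposition~\ref{proposition:convergence_to_PE} is an asymptotic statement for \emph{fixed} $\theta$, it gives no uniform rate, and the correspondence continuity is not established (and is delicate at boundary $\theta$). The paper avoids all of this. Its key observation is that because $\theta_l=x(t_l)$, the storage \eqref{eq:DeltaStorageFunc} at the reset time simplifies to
\[
\mathcal S_{\theta_l}(x(t_l),p(t_l))=\eta\sum_{k}\ln\Big(\sum_s \theta_{l,s}^k\exp(\eta^{-1}p_s^k(t_l))\Big)-\theta_l^Tp(t_l),
\]
and since \eqref{eq:parameter_update_conditions_b} gives $p(t_l)\to\mathcal F(\theta_l)$ while $\theta_l\to\mathbb{NE}(\mathcal F)$, this quantity tends to $0$. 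Combining \eqref{eq:weakAntipassivity} and \eqref{eq:delta_passivity} yields the uniform bound $\mathcal S_{\theta_l}(x(t),p(t))\le \mathcal S_{\theta_l}(x(t_l),p(t_l))+\alpha_{x,p}(t_l)$ for all $t\in[t_l,t_{l+1})$, so $\mathcal S_{\theta_l}(x(t),p(t))\to 0$ as $t\to\infty$ regardless of how long the intervals are. Strong convexity of $\mathcal D(\cdot\|\theta_l)$ then gives $\mathcal S_{\theta_l}(x(t),p(t))\ge\tfrac{\eta}{2}\|\dot x(t)\|_2^2$, hence $\|\dot x(t)\|_2\to 0$. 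The proof closes by contradiction: if $x(\tau_m)\to\bar x\notin\mathbb{NE}(\mathcal F)$, Assumption~\ref{assumption:pdm}-\ref{assumption:stationary_model} plus the storage limit force $\bar x$ to be the maximizer of $z^T\mathcal F(\bar x)-\eta\mathcal D(z\|\bar\theta)$ for the corresponding limit $\bar\theta\in\mathbb{NE}(\mathcal F)$; contractivity and the Nash property then yield $\eta\mathcal D(\bar x\|\bar\theta)\le(\bar x-\bar\theta)^T\mathcal F(\bar\theta)\le 0$, so $\bar x=\bar\theta$. This line is shorter and sidesteps the correspondence-continuity issue you flagged.
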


Lemma~\ref{lemma:stability} suggests that if the parameter $\theta$ is updated in a way that the resulting sequence of the parameter updates satisfies \eqref{eq:parameter_update_conditions}, the convergence to the Nash equilibrium set is guaranteed. Unlike in Proposition~\ref{proposition:convergence_to_PE}, the underlying population game $\mathcal F$ needs to be contractive to establish the convergence.

To evaluate \eqref{eq:parameter_update_conditions_a} for the parameter update, since the agents may not have access to the quantity $\mathcal F(x(t))$, they need to estimate it using the payoff vector $p(t)$. Suppose the estimation error $\|p(t) - \mathcal F (x(t))\|_2$ is bounded by a function $\beta_{x,p}(t)$, i.e.,
\begin{align}
  \|p(t) - \mathcal F (x(t))\|_2 \leq \beta_{x,p}(t), \, \forall t \geq 0,
\end{align}
for which $\lim_{t \to \infty} \beta_{x,p}(t) = 0$ holds whenever the trajectory $x(t), \, t \geq 0$ satisfies $\lim_{t \to \infty} \|\dot x(t)\|_2 = 0$. Note that according to Assumption~\ref{assumption:pdm}-1 such a function $\beta_{x,p}$ always exists. Thus, we derive the following relation.
\begin{align} \label{eq:inequality_for_algorithm}
  &\max_{z \in \mathbb X} \left( z - \theta_{l+1} \right)' \left( \mathcal F(\theta_{l+1}) - \eta \nabla \mathcal D \left( \theta_{l+1} \,\|\, \theta_{l} \right) \right) \nonumber \\
  &\leq \max_{z \in \mathbb X} \left( z - \theta_{l+1} \right)' \left( p (t_{l+1}) - \eta \nabla \mathcal D \left( \theta_{l+1} \,\|\, \theta_{l} \right) \right) \nonumber \\
  &\qquad \qquad + \max_{z \in \mathbb X} \left( z - \theta_{l+1} \right)' \left( \mathcal F(\theta_{l+1}) - p (t_{l+1}) \right) \nonumber \\
  &\leq \max_{z \in \mathbb X} \left( z - \theta_{l+1} \right)' \left( p(t_{l+1}) - \eta \nabla \mathcal D \left( \theta_{l+1} \,\|\, \theta_{l} \right) \right) \nonumber \\
  &\qquad \qquad + \sqrt{2M} \beta_{x,p}(t_{l+1}).
\end{align}
Using \eqref{eq:inequality_for_algorithm}, we can verify that \eqref{eq:parameter_update_conditions} holds if the parameter is updated as $\theta_{l+1} = x(t_{l+1})$ at each $t_{l+1}$ satisfying
\begin{multline} \label{eq:parameter_update_rule}
  \max_{z \in \mathbb X} ( z - \theta_{l+1} )' ( p(t_{l+1}) - \eta \nabla \mathcal D ( \theta_{l+1} \| \theta_l ) ) \\ + \alpha_{x,p}(t_{l+1}) + \sqrt{2M} \beta_{x,p}(t_{l+1})
  \leq \frac{\eta}{2} \mathcal D ( \theta_{l+1} \| \theta_l ).
\end{multline}

In what follows, we discuss whether such time instant $t_{l+1}$ always exists. Suppose the parameter $\theta$ is fixed to $\theta_l = x(t_l)$. According to Proposition~\ref{proposition:convergence_to_PE} and the definition of the KLD-RL EDM \eqref{eq:closed_loop_a}, the social state $x(t)$ converges to the perturbed Nash equilibrium set $\mathbb{PNE}_{\eta, \theta_l}(\mathcal F)$ which implies
\begin{subequations} \label{eq:convergence_to_PE_implications}
  \begin{align}
    &\lim_{t \to \infty} \| \dot x(t) \|_2 \to 0 \\
    &\lim_{t \to \infty} \textstyle \max_{z \in \mathbb X} ( z \!-\! x(t) )' ( p(t) \!-\! \eta \nabla \mathcal D ( x(t) \| \theta_l ) ) \to 0.
  \end{align}
\end{subequations}
Recall that $\lim_{t \to \infty} \|\dot x(t) \|_2 = 0$ implies $\lim_{t \to \infty} \alpha_{x,p}(t) = 0$ and $\lim_{t \to \infty} \beta_{x,p}(t) = 0$.
Hence, by \eqref{eq:convergence_to_PE_implications}, the following term 
vanishes as $t$ tends to infinity:
\begin{multline*}
  \max_{z \in \mathbb X} ( z - x(t) )' ( p(t) - \eta \nabla \mathcal D ( x(t) \| \theta_l ) ) \\ + \alpha_{x,p}(t) + \sqrt{2M} \beta_{x,p}(t).
\end{multline*}
Consequently, either we can find $t_{l+1}$ satisfying \eqref{eq:parameter_update_rule} or the state $x(t)$ converges to $\theta_l$, i.e., $\lim_{t \to \infty} \mathcal D(x(t) \,\|\, \theta_l) = 0$. However, by Proposition~\ref{proposition:convergence_to_PE} and Definition~\ref{def:perturbed_nash}, the latter case implies that $\theta_l$ must be the Nash equilibrium and, hence, the limit point of $x(t)$. In conclusion, for both cases, by applying Lemma~\ref{lemma:stability}, the convergence of the social state $x(t)$ to the Nash equilibrium set is guaranteed. In what follows, we only consider the case where the parameter update rule \eqref{eq:parameter_update_rule} yields an infinite sequence.

Inspired by \eqref{eq:parameter_update_rule}, we propose Algorithm~\ref{algorithm:parameter_update} to realize such parameter update for cases where the PDM \eqref{eq:closed_loop_b} is \eqref{eq:population_game_time_delay} or \eqref{eq:smoothing_pdm}.

\begin{algorithm}
  \caption{At each time $t$, \texttt{Parameter\_Update} is executed to check for parameter updates.}
  \label{algorithm:parameter_update}
  \DontPrintSemicolon
  \SetKwFunction{FCheckCondition}{Check\_Condition}
  \SetKwProg{Fn}{Function}{:}{}
  \SetKwFunction{proc}{Parameter\_Update}
  \SetKwProg{myproc}{Procedure}{}{}
  
  \Fn{\FCheckCondition{$t$}}{
    Check Eq. \eqref{eq:bias_selection_criteria_1} (or Eq. \eqref{eq:bias_selection_criteria_2}) using $x(t)$, $p(t)$, $\theta$, then return the result ($True/False$).
  } 
  \myproc{\proc{t}}{
    \If{$\FCheckCondition{$t$} == True$} {
      $\theta \gets x(t)$ {\small \tcp*[l]{update parameter $\theta$}}
    }
  }
\end{algorithm}

\begin{enumerate}
\item \textbf{Payoff function with a time-dependent delay \eqref{eq:population_game_time_delay}:}
  \begin{align}
    \label{eq:bias_selection_criteria_1}
    &\max_{z \in \mathbb X} ( z - x(t) )' ( p(t) - \eta \nabla \mathcal D ( x(t) \,\|\, \theta ) ) \nonumber \\
    &\qquad + \sqrt{2M} B_{D\mathcal F} B_d \max_{\tau \in [t - B_d, t]} \|\dot x (\tau)\|_2 \nonumber \\
    &\leq \frac{\eta}{2} \mathcal D ( x (t) \,\|\, \theta ).
  \end{align}
  
\item \textbf{Smoothing PDM \eqref{eq:smoothing_pdm}:}
  \begin{align}
    \label{eq:bias_selection_criteria_2}
    &\max_{z \in \mathbb X} ( z - x(t) )' ( p(t) - \eta \nabla \mathcal D ( x(t) \,\|\, \theta ) ) \nonumber \\
    & \qquad + \sqrt{2M} \Big( \left( \left\| p(\gamma t) \right\|_2 + B_{\mathcal F} \right) \exp({-\lambda \left(1 - \gamma \right) t }) \nonumber \\
    & \qquad + B_{D \mathcal F} \int_{\gamma t}^{t} \exp({-\lambda (t - \tau)}) \| \dot x(\tau) \|_2 \, \mathrm d\tau \Big) \nonumber \\
    & \leq \frac{\eta}{2} \mathcal D ( x (t) \,\|\, \theta ).
  \end{align}
\end{enumerate}
where $\gamma$ is any fixed real number in $(0, 1)$ and $\dot x(\tau)$ is computed using \eqref{eq:closed_loop_a}. 

To realize Algorithm~\ref{algorithm:parameter_update}, the agents only need to know the upper bounds $B_d, B_{D \mathcal F}$ on $d, D \mathcal F$ for \eqref{eq:population_game_time_delay}, or the bounds $B_{\mathcal F}, B_{D \mathcal F}$ on $\mathcal F, D \mathcal F$ for \eqref{eq:smoothing_pdm}. 
The motivation behind adopting the algorithm is analogous to iterative regularization techniques that have been frequently used in optimization and machine learning methods to avoid issues with over-fitting or fluctuation, especially when cost functions are noisy \cite{10.5555/1734076}. Similarly, in our multi-agent learning problem, Algorithm~\ref{algorithm:parameter_update} prevents the state trajectory from exhibiting oscillation and ensures the convergence to the Nash equilibrium set.

\change{
Recall the conditions \ref{stability_contractive_games}) and \ref{stability_strictly_contractive_games}) stated in Lemma~\ref{lemma:stability} as follows:
\begin{enumerate}
    \renewcommand{\theenumi}{(C\arabic{enumi}}
  \item $\mathcal F$ is contractive and has a Nash equilibrium in $\mathrm{int}(\mathbb X)$. \label{stability_contractive_games}
  \item $\mathcal F$ is strictly contractive. \label{stability_strictly_contractive_games}
  \end{enumerate}
Leveraging Lemmas~\ref{lemma:antipassivity_delayed_pdm}-\ref{lemma:stability}, in the following theorems, we state convergence results for the closed-loop model  \eqref{eq:closed_loop} when $\theta$ of \eqref{eq:closed_loop_a} is updated according to Algorithm~\ref{algorithm:parameter_update} and \eqref{eq:closed_loop_b} is a payoff function with a time-dependent delay \eqref{eq:population_game_time_delay} (Theorem~\ref{theorem:stability_time_delay}) or smoothing PDM \eqref{eq:smoothing_pdm} (Theorem~\ref{theorem:stability_smoothing_pdm}). The proofs of the theorems are provided in Appendix~\ref{proof_theorem:stability}.
\begin{theorem} \label{theorem:stability_time_delay}
  Suppose that the social state $x(t)$ and payoff vector $p(t)$ are derived by the closed-loop model \eqref{eq:closed_loop}, where the PDM \eqref{eq:closed_loop_b} is a payoff function with a time-dependent delay \eqref{eq:population_game_time_delay}.
  If $\mathcal F$ satisfies one of the conditions \ref{stability_contractive_games}) or \ref{stability_strictly_contractive_games}), $\eta > \frac{B_{D \mathcal F}(2+B_{\dot d})}{2}$ holds, and the parameter $\theta$ is iteratively updated according to Algorithm~\ref{algorithm:parameter_update} with  \eqref{eq:bias_selection_criteria_1}, where $B_{D \mathcal F}$ is the upper bound on $D\mathcal F$ as defined in Assumption~\ref{assumption:payoff_function} and $B_{\dot d}$ is defined such that $\dot d(t) \in [-B_{\dot d}, 1)$ holds, then the state $x(t)$ converges to the Nash equilibrium set.
\end{theorem}}

\begin{theorem} \label{theorem:stability_smoothing_pdm}
  Suppose that the social state $x(t)$ and payoff vector $p(t)$ are derived by the closed-loop model \eqref{eq:closed_loop}, where the PDM \eqref{eq:closed_loop_b} is a smoothing PDM \eqref{eq:smoothing_pdm} with $\mathcal F(x) = Fx + b$. If $\mathcal F$ satisfies one of the conditions \ref{stability_contractive_games}) or \ref{stability_strictly_contractive_games}),
  $\eta > \frac{1}{4} \|F - F' \|_2$ holds, and the parameter $\theta$ is iteratively updated according to Algorithm~\ref{algorithm:parameter_update} with \eqref{eq:bias_selection_criteria_2}, then the state $x(t)$ converges to the Nash equilibrium set.
\end{theorem}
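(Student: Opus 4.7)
The plan is to reduce Theorem~\ref{theorem:stability_smoothing_pdm} to Lemma~\ref{lemma:stability} by verifying its three hypotheses for the smoothing-PDM setting. Two of them come essentially for free: by Lemma~\ref{lemma:antipassivity_smoothing_pdm} the smoothing PDM is weakly $\delta$-antipassive with deficit $\nu^\ast = \tfrac{1}{4}\|F-F^T\|_2$, and by hypothesis $\eta > \nu^\ast$ and $\mathcal F$ satisfies \ref{stability_contractive_games}) or \ref{stability_strictly_contractive_games}). The substantive work is to show that the sequence $\{(t_l,\theta_l)\}$ generated by Algorithm~1 through the criterion \eqref{eq:bias_selection_criteria_2} satisfies the parameter-update conditions \eqref{eq:parameter_update_conditions_a}--\eqref{eq:parameter_update_conditions_b}.

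The first step is to produce an explicit payoff-tracking bound $\beta_{x,p}(t)$ with $\|p(t)-\mathcal F(x(t))\|_2 \le \beta_{x,p}(t)$ that matches the parenthesized expression in \eqref{eq:bias_selection_criteria_2}. Starting from the variation-of-parameters solution of \eqref{eq:smoothing_pdm} on $[\gamma t, t]$,
\begin{equation*}
p(t) = e^{-\lambda(1-\gamma)t}\,p(\gamma t) + \lambda\!\int_{\gamma t}^{t} e^{-\lambda(t-\tau)} \mathcal F(x(\tau))\,\mathrm d\tau,
\end{equation*}
I would integrate by parts using $\lambda e^{-\lambda(t-\tau)} = \frac{\mathrm d}{\mathrm d\tau} e^{-\lambda(t-\tau)}$. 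The boundary term at $\tau=t$ produces exactly $\mathcal F(x(t))$, which cancels in $p(t)-\mathcal F(x(t))$; the boundary term at $\tau=\gamma t$ contributes $-e^{-\lambda(1-\gamma)t}\mathcal F(x(\gamma t))$, bounded by $e^{-\lambda(1-\gamma)t}B_{\mathcal F}$; and the interior term becomes $-\int_{\gamma t}^{t} e^{-\lambda(t-\tau)} D\mathcal F(x(\tau))\dot x(\tau)\,\mathrm d\tau$, controlled via Assumption~\ref{assumption:payoff_function}. Collecting the three contributions yields the required bound
\begin{equation*}
\beta_{x,p}(t)=(\|p(\gamma t)\|_2+B_{\mathcal F})e^{-\lambda(1-\gamma)t}+B_{D\mathcal F}\!\int_{\gamma t}^{t}\! e^{-\lambda(t-\tau)}\|\dot x(\tau)\|_2\,\mathrm d\tau.
\end{equation*}

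With this $\beta_{x,p}$ in hand, condition \eqref{eq:parameter_update_conditions_a} follows immediately from the algebraic chain \eqref{eq:inequality_for_algorithm}: its right-hand side is bounded by the left-hand side of \eqref{eq:bias_selection_criteria_2}, and hence by $\tfrac{\eta}{2}\mathcal D(\theta_{l+1}\|\theta_l)$. For condition \eqref{eq:parameter_update_conditions_b}, I use that Lemma~\ref{lemma:antipassivity_smoothing_pdm} identifies $\alpha_{x,p}(t)=\sqrt{n}\,\|p(t)-\mathcal F(x(t))\|_2\le\sqrt{n}\,\beta_{x,p}(t)$; when $\mathcal D(\theta_{l+1}\|\theta_l)\to 0$, the right-hand side of \eqref{eq:bias_selection_criteria_2} vanishes, forcing the nonnegative term $\sqrt{2M}\,\beta_{x,p}(t_{l+1})$ to tend to zero, which gives $\alpha_{x,p}(t_{l+1})\to 0$ and, since $\theta_l = x(t_l)$, also $\|p(t_l)-\mathcal F(\theta_l)\|_2\le\beta_{x,p}(t_l)\to 0$. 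All hypotheses of Lemma~\ref{lemma:stability} are then satisfied and the conclusion $\lim_{t\to\infty}\inf_{z\in\mathbb{NE}(\mathcal F)}\|x(t)-z\|_2=0$ follows.

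The main obstacle is the integration-by-parts step that produces $\beta_{x,p}$: one has to spot the key cancellation of the boundary term $\mathcal F(x(t))$ against the $\mathcal F(x(t))$ appearing in $p(t)-\mathcal F(x(t))$, so that the exponential factor $e^{-\lambda(1-\gamma)t}$ is paid only on the \textbf{past} values $p(\gamma t)$ and $\mathcal F(x(\gamma t))$, and the residual integral is linear in $\|\dot x\|_2$. Getting this structurally right is what makes $\beta_{x,p}(t)$ simultaneously implementable in Algorithm~1 (agents only need $B_{\mathcal F}$, $B_{D\mathcal F}$, $\lambda$, $\gamma$, and $p(\gamma t)$) and small enough that $\beta_{x,p}\to 0$ whenever the $\theta$-sequence stabilizes in KLD. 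Everything beyond this derivation is symbolic bookkeeping against Lemma~\ref{lemma:stability}.
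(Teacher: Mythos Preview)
Your proposal is correct and follows essentially the same route as the paper: reduce to Lemma~\ref{lemma:stability} via Lemma~\ref{lemma:antipassivity_smoothing_pdm}, produce the payoff-tracking bound $\beta_{x,p}(t)$ matching the parenthesized expression in \eqref{eq:bias_selection_criteria_2}, and then verify \eqref{eq:parameter_update_conditions_a}--\eqref{eq:parameter_update_conditions_b} exactly as you describe. The only cosmetic difference is that the paper derives $\beta_{x,p}$ by writing the ODE for the error $e(t)=p(t)-\mathcal F(x(t))$, namely $\dot e = -\lambda e - D\mathcal F(x)\dot x$, and solving it on $[\gamma t_{l+1},t_{l+1}]$, whereas you obtain the identical formula via variation of parameters for $p$ followed by integration by parts; the resulting bound and the rest of the argument coincide.
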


%%%%%%%%%%%%%%%%%%%%%%%%%%%%%%%%%%%%%%%%%%%%%%%%%%%%%%%%% 
  \section{Distributed Parameter Update} \label{section:distributed_parameter_update}
  In this section, we consider the scenario where the state and payoff vector associated with a population are private information, and agents within each population do not want to share it with others outside the population. Exposing such private information could compromise their decision making; for example, using the information, agents from an adversarial population could revise their strategies to deliberately diminish the payoff vector of the exposed population. However, it is important to note that to validate the conditions \eqref{eq:bias_selection_criteria_1} or \eqref{eq:bias_selection_criteria_2} of Algorithm~\ref{algorithm:parameter_update}, every population in the society needs to assess the states and payoff vector of all other populations. In what follows, we discuss how Algorithm~\ref{algorithm:parameter_update} can be revised to address the privacy issue. We also explain how the parameter update can be executed in a distributed manner by individual agents.

  We proceed with restating the conditions \eqref{eq:bias_selection_criteria_1} and \eqref{eq:bias_selection_criteria_2} for each population~$k$, respectively, as follows:
  \begin{enumerate}
  \item \textbf{Payoff function with a time-dependent delay \eqref{eq:population_game_time_delay}:}
    \begin{align}
      \label{eq:bias_selection_criteria_1_distributed}
      &\max_{z^k \in \mathbb X^k} ( z^k - x^k(t) )' ( p^k(t) - \eta \nabla \mathcal D ( x^k(t) \| \theta^k ) ) \nonumber \\
      &\qquad +\sqrt{2M} B_{D\mathcal F} B_d \max_{\tau \in [t - B_d, t]}  \| \dot x^k(\tau) \|_2 \nonumber \\
      &\leq \frac{\eta}{2} \mathcal D ( x^k (t) \| \theta^k ).
    \end{align}

  \item \textbf{Smoothing PDM \eqref{eq:smoothing_pdm}:}
    \begin{align}
      \label{eq:bias_selection_criteria_2_distributed}
      &\max_{z^k \in \mathbb X^k} ( z^k - x^k(t) )' ( p^k(t) - \eta \nabla \mathcal D ( x^k(t) \,\|\, \theta^k ) ) \nonumber \\
      & \quad + \sqrt{2M} \Big( \Big( \| p^k(\gamma t) \|_2 + \frac{B_{\mathcal F}}{M} \Big) \exp (-\lambda (1 - \gamma ) t ) \nonumber \\
      & \quad + B_{D \mathcal F} \int_{\gamma t}^{t} \exp(-\lambda (t - \tau)) \| \dot x^k(\tau) \|_2 \, \mathrm d\tau \Big) \nonumber \\
      & \leq \frac{\eta}{2} \mathcal D ( x^k (t) \,\|\, \theta^k ).
    \end{align}
  \end{enumerate}
  Using the facts that $\| \dot x(\tau) \|_2 \leq \sum_{k=1}^M  \| \dot x^k(\tau) \|_2$ and $\| p(\gamma t) \|_2 \leq \sum_{k=1}^M  \|  p^k(\gamma t) \|_2$, we can validate \eqref{eq:bias_selection_criteria_1} (or \eqref{eq:bias_selection_criteria_2}) if \eqref{eq:bias_selection_criteria_1_distributed} (or \eqref{eq:bias_selection_criteria_2_distributed}) holds for all $k$ in $\{1, \cdots, M\}$.
  Consequently, agents in population~$k$ only need to inform agents in all other populations whether the condition \eqref{eq:bias_selection_criteria_1_distributed} (or \eqref{eq:bias_selection_criteria_2_distributed}) is satisfied at each time $t$, without sharing other information including their parameter $\theta^k$, population state $x^k(t)$, and payoff vector $p^k(t)$.

  In Algorithm~\ref{algorithm:distributed_parameter_update}, we present a distributed algorithm in which an individual agent in each population~$k$ maintains its own parameter $\theta^k$ and decides when to update its value.\footnote{Since Algorithm~\ref{algorithm:distributed_parameter_update} ensures that every agent in each population~$k$ maintains the same value of $\theta^k$, without loss of generality, we use the same notation $\theta^k$ to denote the regularization parameter for any agent in the population.} Through communication with their respective neighbors across the society, all agents reach an agreement on a single time instant at which the condition \eqref{eq:bias_selection_criteria_1_distributed} (or \eqref{eq:bias_selection_criteria_2_distributed}) is satisfied for every population. Consequently, the algorithm facilitates a distributed update of the parameters $\theta^1, \cdots, \theta^M$ without requiring a centralized coordinator.

  In the algorithm, $G = (\mathbb V, \mathbb E)$ represents a communication graph, where $\mathbb V$ is the set of nodes representing all agents in the society, and $\mathbb E$ represents directed edges indicating the viability and direction of communication between two agents. We consider the scenario where the agents communicate with their neighbors at intervals of $T_{\text{comm}}$; that is, they can share information at $t = T_{\text{comm}}, \, 2 T_{\text{comm}}, \, 3 T_{\text{comm}}, \, \cdots$.  
  According to the procedure {\tt Distributed\_Parameter\_Update($t$)}, using the time variable $\tau_{\text{update}}$ and the boolean variable $b_{\text{update}}$, each agent in population~$k$ tracks the earliest time $t$ at which the condition \eqref{eq:bias_selection_criteria_1_distributed} (or \eqref{eq:bias_selection_criteria_2_distributed}) is satisfied using its own information on $x^k(t), p^k(t), \theta^k$ (Lines 4 -- 6). When a neighbor $l$ shares its recorded $\tau_{\text{update}}^{(l)}$, the agent compares it with its own $\tau_{\text{update}}$. If $\tau_{\text{update}}^{(l)}$ is larger, then the agent finds the earliest time $\tau$ in $[\tau_{\text{update}}^{(l)}, t]$ at which \eqref{eq:bias_selection_criteria_1_distributed} (or \eqref{eq:bias_selection_criteria_2_distributed}) is satisfied (Lines 10 -- 12). If such $\tau$ exists then set $\tau_{\text{update}} = \tau$ and $b_{\text{update}} = True$; otherwise, the agent resets both $\tau_{\text{update}}$ and $b_{\text{update}}$ to the current time $t$ and $False$, respectively (Lines 13 -- 18).

  If there is no update of $\tau_{\text{update}}$ for longer than the period of $2\,\mathrm{diameter}(G) \times T_{\text{comm}}$, where $\mathrm{diameter}(G)$ is the diameter of the graph $G$, then the agent updates its parameter $\theta^{k}$ with $x^k(\tau_{\text{update}})$ (Lines 19 -- 21).
  When $G$ is strongly connected, under Algorithm~\ref{algorithm:distributed_parameter_update}, the condition $\tau_{\text{update}} + 2\,\mathrm{diameter}(G) \times T_{\text{comm}} > t$ in Line 19 implies that every agent in the society reaches consensus on the same value of $\tau_{\text{update}}$ and $b_{\text{update}} = True$ holds, indicating that  \eqref{eq:bias_selection_criteria_1_distributed} (or \eqref{eq:bias_selection_criteria_2_distributed}) is satisfied for every $k$ in $\{1, \cdots, M\}$. Hence, the algorithm guarantees synchronized parameter updates through communication within each agent's neighborhood. However, since the agents must wait a period of $2\,\mathrm{diameter}(G) \times T_{\text{comm}}$ from the time $\tau_{\text{update}}$ at which the condition \eqref{eq:bias_selection_criteria_1_distributed} (or \eqref{eq:bias_selection_criteria_2_distributed}) is met for every $k$ before they update their parameter $\theta^k$, this period can be considered the lag in the distributed parameter update.

\begin{algorithm}
    \caption{At each time $t$, an individual agent in population~$k$ executes \texttt{Distributed\_Parameter\_Update} to check for parameter updates.}
    \label{algorithm:distributed_parameter_update}
    \DontPrintSemicolon
    \SetKwFunction{FCheckCondition}{Check\_Condition}
    \SetKwProg{Fn}{Function}{:}{}
    \SetKwFunction{proc}{Distributed\_Parameter\_Update}
    \SetKwProg{myproc}{Procedure}{}{}
    
    \Fn{\FCheckCondition{$t$}}{
      Check Eq. \eqref{eq:bias_selection_criteria_1_distributed} (or Eq. \eqref{eq:bias_selection_criteria_2_distributed}) using $x^k(t)$, $p^k(t)$, $\theta^{k}$, then return the result ($True/False$).
    }
    \myproc{\proc{t}}{
    \If{$b_{\text{update}} == False$} {
        $\tau_{\text{update}} \gets t$ \;
        $b_{\text{update}} \gets \FCheckCondition{$t$}$ \;
        }
        
      \If{$t \in \{T_{\text{comm}}, 2 T_{\text{comm}}, 3 T_{\text{comm}}, \cdots \}$} {
        Share $\tau_{\text{update}}$ with out-neighbors \;
        \For{$l \in \mathbb N_{\text{in}}$} { {\footnotesize\tcp*[l]{process data $\tau_{\text{update}}^{(l)}$ from each agent~$l$ in the in-neighborhood $\mathbb N_{\text{in}}$.}}
          \If{$\tau_{\text{update}} < \tau_{\text{update}}^{(l)}$} {
            $\mathbb T \!\gets\! \{ \tau_{\text{update}}^{(l)} \!\leq \!\tau \!\leq\! t \big|$\; $~~~\FCheckCondition{$\tau$} \!==\! True \}$ \;
            \uIf{$\mathbb T$ is not empty} {
              $\tau_{\text{update}} \gets \min \mathbb T$ \;
              $b_{\text{update}} \gets True$ \;
            }
            \Else {            
              $\tau_{\text{update}} \gets t$ \;
              $b_{\text{update}} \gets False$ \;
            }
          }
        }
      }
      \If{$\tau_{\text{update}} + 2\,\mathrm{diameter}(G) \times T_{\text{comm}} > t$}{
        $b_{\text{update}} \gets False$ \;
        $\theta^{k} \gets x^k(\tau_{\text{update}})$
        {\tcp*[l]{\small update parameter $\theta^k$}}
      }
      
    }
\end{algorithm}

%%%%%%%%%%%%%%%%%%%%%%%%%%%%%%%%%%%%%%%%%%%%%%%%%%%%%%%%% 
\section{Simulations with Numerical Examples} \label{sec:simulation}

\subsection{Convergence and Performance Improvements} \label{sec:simulation_a}
\begin{figure} [t]
  \center
  \subfigure[]{
    \includegraphics[trim={.0in .2in 0 .1in},width=1.6in]{./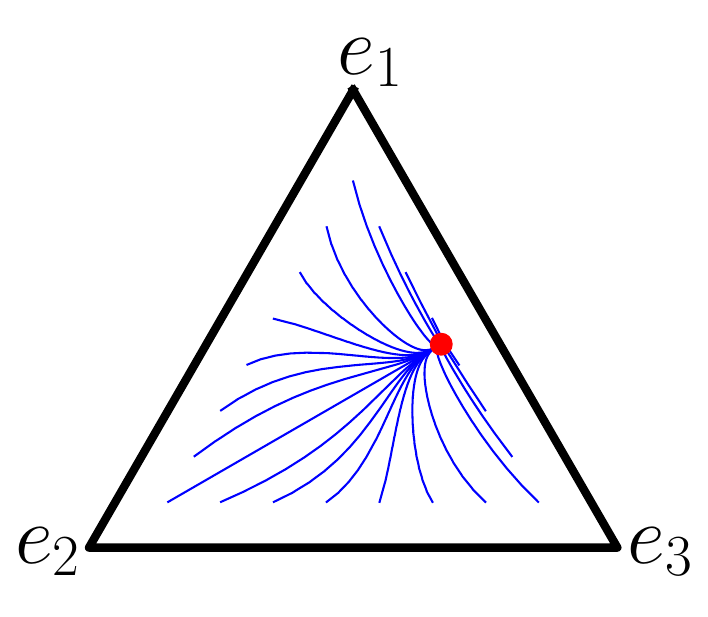}
    \label{fig:kld_rl_a}
  }
  \subfigure[]{
    \includegraphics[trim={.0in .2in 0 .1in},width=1.6in]{./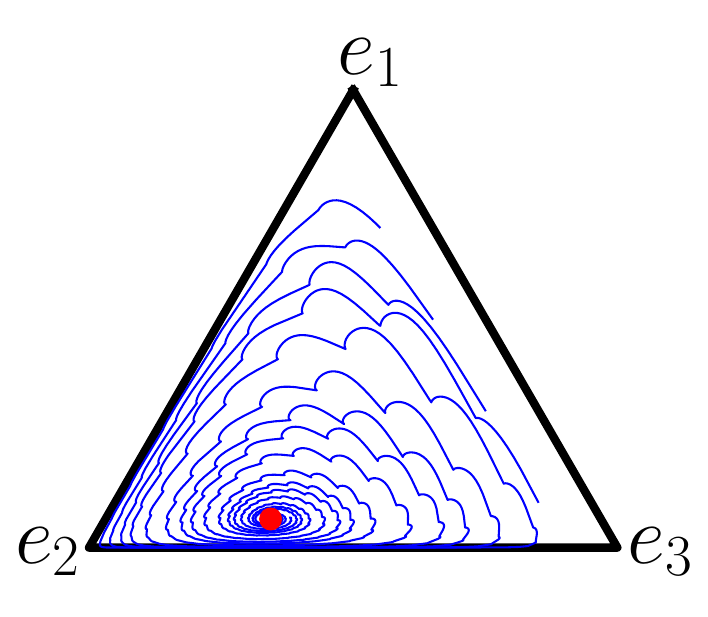}
    \label{fig:kld_rl_b}
  }
  \caption{State trajectories of population~1 derived by the closed-loop model \eqref{eq:closed_loop}. The PDM \eqref{eq:closed_loop_b} is defined by (a) the payoff function with a fixed unit time delay with $\mathcal F$ given by \eqref{eq:congestion_game} and (b) the smoothing PDM with $\mathcal F$ defined by \eqref{eq:rps_game}. The parameter $\eta$ of the EDM \eqref{eq:closed_loop_a} is selected as $\eta=4.5$ for (a) and $\eta=0.6$ for (b).  The red circle in (a) and (b) marks the Nash equilibrium of each game.}
  \label{fig:kld_rl}
\end{figure}

\begin{figure} [t]
  \center
  \subfigure[] {\includegraphics[trim={.1in .2in 0 0},width=1.65in]{./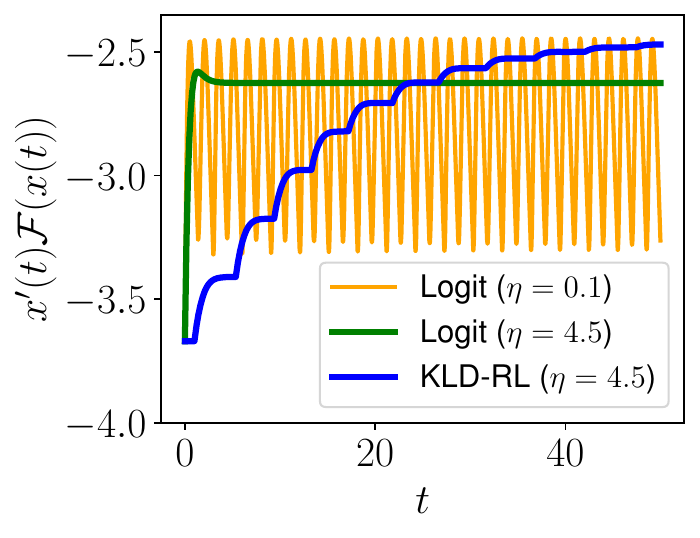}
    \label{fig:performance_comparison_a}
  }
  \subfigure[] {\includegraphics[trim={.1in .2in 0 0},width=1.65in]{./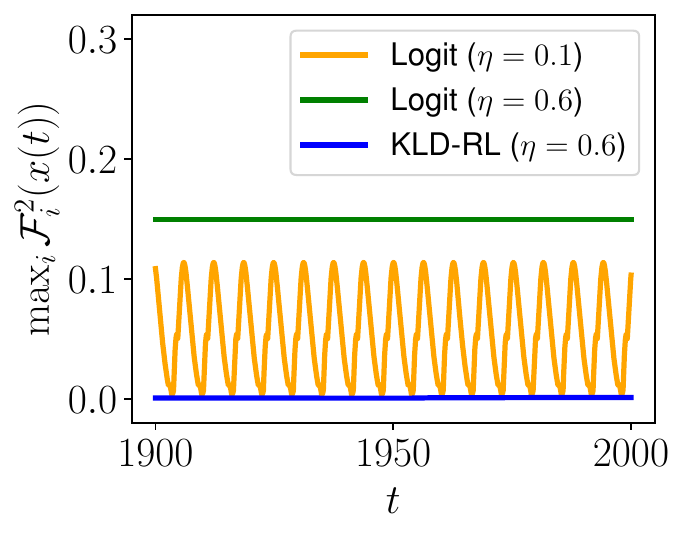}
    \label{fig:performance_comparison_b}
  }

  \caption{(a) Average payoff $x'(t) \mathcal F (x(t))$ of the entire society in the congestion game and (b) maximum gain $\max_{i \in \{1, 2, 3\}} \mathcal F_i^2(x(t))$ of population~2 (the opponent of population~1) in the zero-sum game for the logit protocol and KLD-RL protocol.}
  \label{fig:performance_comparison}
  \vspace{-1ex}
\end{figure}

We use Examples~\ref{example:congestion_game} and \ref{example:zero_sum_game} to illustrate our main results. We evaluate the social state trajectory of the closed-loop model \eqref{eq:closed_loop} when the PDM \eqref{eq:closed_loop_b} is defined by \textbf{(i)} the payoff function with a time-dependent delay \eqref{eq:population_game_time_delay}, with $\mathcal F$ defined as in \eqref{eq:congestion_game}, and \textbf{(ii)} the smoothing PDM \eqref{eq:smoothing_pdm}, with $\mathcal F$ defined as in \eqref{eq:rps_game}.

\subsubsection{Congestion Population Game with Time Delay} \label{sec:congestion_game_simulation}
We illustrate our main results using the congestion population game \eqref{eq:congestion_game} with a fixed unit time delay ($d(t)=1,~\forall t \geq 0$); hence, $B_{\dot d} = 0$. For \eqref{eq:closed_loop_a}, we set $\eta = 4.5$ to ensure that $\eta > B_{D \mathcal F}$, where $B_{D \mathcal F}$ is chosen to be the $2$-norm of the payoff matrix \eqref{eq:congestion_game}, and update the parameter $\theta$ using Algorithm~\ref{algorithm:parameter_update} with \eqref{eq:bias_selection_criteria_1}. For simplicity, we assign $B_d = 1$.

Fig.~\ref{fig:kld_rl_a} illustrates that resulting social state trajectories converge to the unique Nash equilibrium of \eqref{eq:congestion_game}. Recall that for the logit protocol case, as depicted in Fig.~\ref{fig:standard_logit_models_delayed_payoff}, the social state trajectories either exhibit oscillation around the Nash equilibrium (when $\eta$ is small) or converge to a stationary point located away from the Nash equilibrium (when $\eta$ is large).

  To assess the efficacy of the KLD-RL model in enabling agents to learn an effective strategy profile in the congestion game, we compare the average payoff  $x'(t) \mathcal F (x(t))$ using the social state $x(t)$ derived by the logit protocol \eqref{eq:StandardLogitProtocol} with $\eta = 0.1, 4.5$ and the KLD-RL protocol \eqref{eq:kld_rl} with ${\eta = 4.5}$. As shown in Fig.~\ref{fig:performance_comparison_a}, the social state trajectory determined by \eqref{eq:kld_rl} converges to the strategy profile attaining the maximum average payoff, approximately $-2.44$, whereas the trajectories determined by the logit protocol do not.

\subsubsection{Zero-sum Game with Smoothing PDM} \label{sec:zerosum_game_simulation}
We iterate the simulations using the smoothing PDM \eqref{eq:smoothing_pdm} defined by the zero-sum game \eqref{eq:rps_game}. We assign $\lambda = 1$ for \eqref{eq:smoothing_pdm}, set $\eta = 0.6$ for \eqref{eq:closed_loop_a} to ensure that $\eta > \frac{1}{4} \|F - F' \|_2$ holds, where $F$ is the payoff matrix derived from \eqref{eq:rps_game}, and update the parameter $\theta$ using Algorithm~\ref{algorithm:parameter_update} with \eqref{eq:bias_selection_criteria_2}, where $\gamma = 0.1$.\footnote{We select $\gamma = 0.1$ as it provides fast convergence of the social state to the Nash equilibrium, which we validated through multiple rounds of simulations using different values of $\gamma$. Recall that, as stated in Theorem~\ref{theorem:stability_smoothing_pdm}, for any $\gamma$ in $(0,1)$, the social state converges to the Nash equilibrium set.}

As illustrated in Fig.~\ref{fig:kld_rl_b}, the resulting social state trajectories converge to the unique Nash equilibrium of \eqref{eq:rps_game}. Similar to the time-delay case considered in \S\ref{sec:congestion_game_simulation}, as shown in Fig.~\ref{fig:standard_logit_models_smoothing_pdm}, the state trajectories derived by the logit protocol either exhibit oscillation around the Nash equilibrium (when $\eta$ is small) or converge to a stationary point located away from the Nash equilibrium (when $\eta$ is large).

  To assess the efficacy of the KLD-RL model in enabling each population to learn an effective strategy profile in the zero-sum game, we evaluate the maximum gain $\max_{i \in \{1, 2, 3\}} \mathcal F_i^2 (x(t))$ of population~2 (the opponent of population~1), using the social state $x(t)$ resulting from the logit protocol \eqref{eq:StandardLogitProtocol} with $\eta = 0.1, 0.6$ and the KLD-RL protocol \eqref{eq:kld_rl} with ${\eta = 0.6}$. As shown in Fig.~\ref{fig:performance_comparison_b}, the social state trajectory determined by \eqref{eq:kld_rl} converges to the strategy profile at which the maximum gain of population~2 is minimized, whereas the trajectories determined by the logit protocol do not.

\subsection{Selection of Regularization Weight $\eta$}
\begin{figure} [t]
  \center
  \subfigure[]{
    \includegraphics[trim={.1in .2in 0 .1in},width=1.61in]{./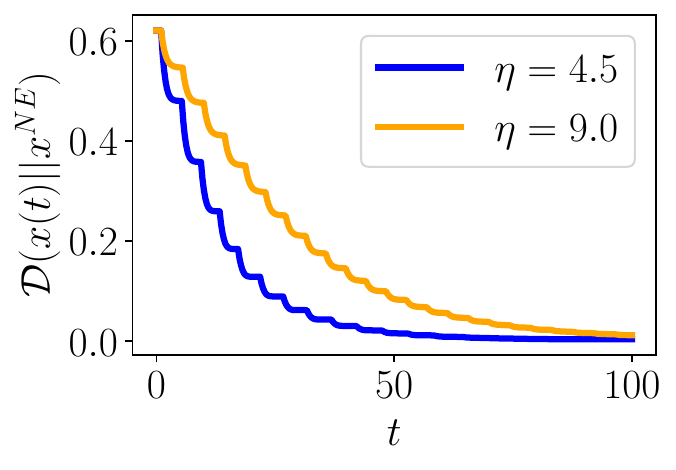}
  }
  \subfigure[]{
    \includegraphics[trim={.1in .2in 0 .1in},width=1.61in]{./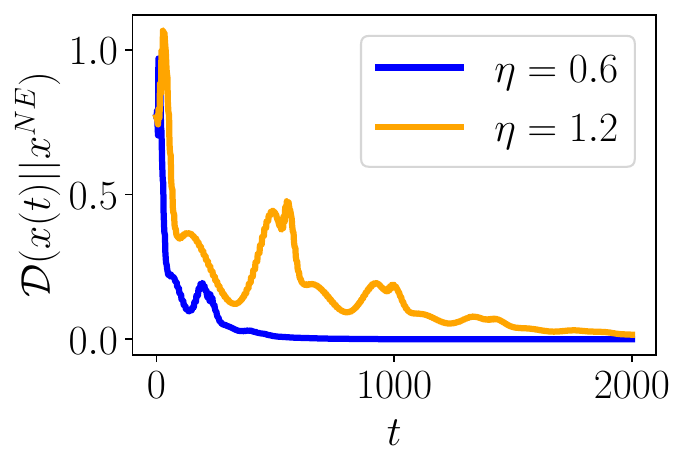}
  }
  \caption{Graphs of the KL divergence $\mathcal D(x(t) \| x^{\text{\scriptsize NE}} )$ between the social state and the Nash equilibrium in the (a) congestion game and (b) zero-sum game.
  }
  \label{fig:kl_learning_convergence_speed}
\end{figure}

As  stated in Theorems~\ref{theorem:stability_time_delay} and \ref{theorem:stability_smoothing_pdm}, a proper selection of $\eta$ in \eqref{eq:closed_loop_a} is critical. In particular, for \eqref{eq:population_game_time_delay} and \eqref{eq:smoothing_pdm}, $\eta$ needs to be larger than $\frac{B_{D \mathcal F}(2+B_{\dot d})}{2}$ and $\frac{1}{4} \|F - F'\|_2$, respectively, to ensure  convergence to the Nash equilibrium set. In practice, unless the payoff function $\mathcal F$ is available, the agents need to estimate upper bounds on $\frac{B_{D \mathcal F}(2+B_{\dot d})}{2}$ and $\frac{1}{4} \|F - F'\|_2$ and select $\eta$ to be larger than these estimates. When these estimates are conservative, the agents will select unnecessarily large $\eta$.

Recall that $\eta$ determines the weight on the regularization in the KLD-RL protocol \eqref{eq:maximization}: Smaller $\eta$ means that the regularization has little effect on the strategy revision and the agents tend to select strategies that increase the average payoff. Hence, assigning smaller $\eta$ would imply faster convergence to the Nash equilibrium set. We verify such observation using simulations.
Fig.~\ref{fig:kl_learning_convergence_speed} depicts the simulation results with two different values of $\eta$ in each of the two PDM cases we considered in \S\ref{sec:simulation_a}. In both cases, when $\eta$ is small, we can observe that the social state converges to the unique Nash equilibrium of each game faster than when $\eta$ is large. In conclusion, for achieving faster convergence to the Nash equilibrium set, the agents need to assign a small value to $\eta$ satisfying the inequality requirement $\eta > \bar \nu$ for the convergence as we discussed in Lemma~\ref{lemma:stability}, and this will require computing an accurate estimate of $\bar \nu$.

\begin{figure} [t]
  \center
  \subfigure[]{
    \includegraphics[trim={.0in .2in 0 .1in},width=1.6in]{./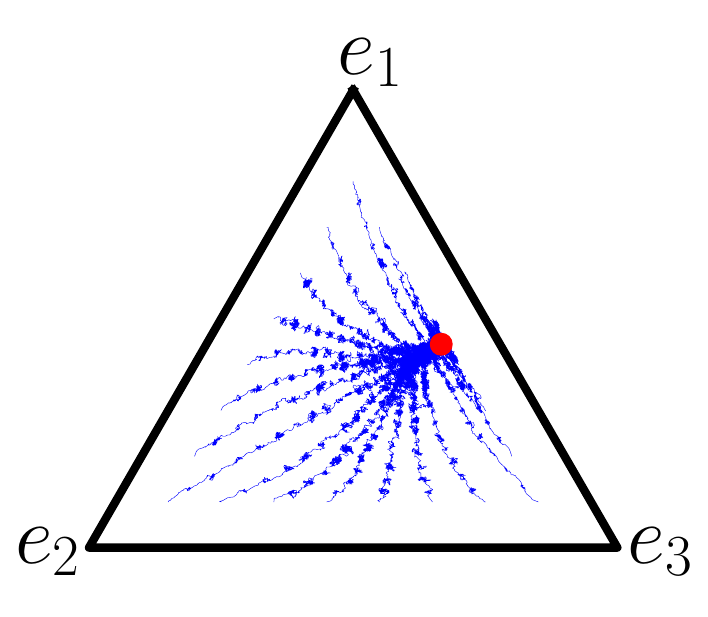}
    \label{fig:kld_rl_finite_population_a}
  }
  \subfigure[]{
    \includegraphics[trim={.0in .2in 0 .1in},width=1.6in]{./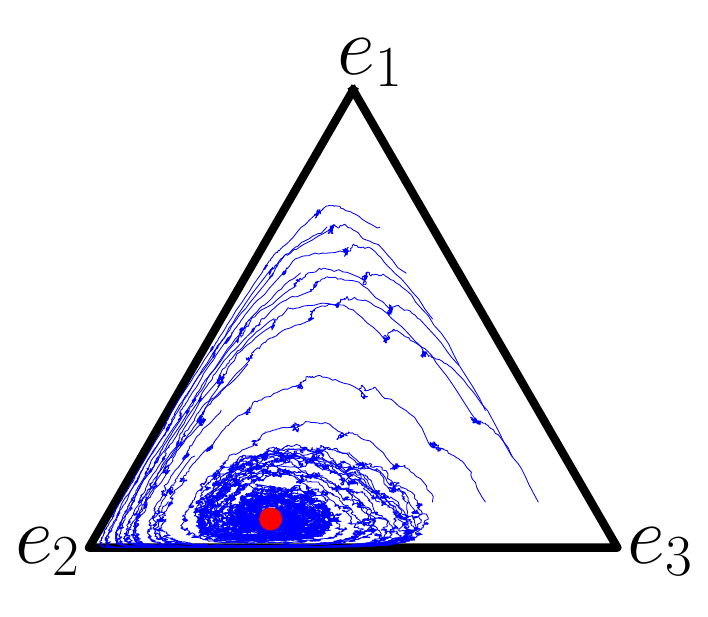}
    \label{fig:kld_rl_finite_population_b}
  }
    \caption{State trajectories of population~1, consisting of a finite number of $1000$ agents, derived using an i.i.d. Poisson process with parameter~$1$ and the KLD-RL protocol \eqref{eq:kld_rl}. The payoff vector is determined by (a) the payoff function with a fixed unit time delay, with $\mathcal F$ given by \eqref{eq:congestion_game} and (b) the smoothing PDM with $\mathcal F$ defined by \eqref{eq:rps_game}. The parameter $\eta$ of \eqref{eq:kld_rl} is selected as $\eta=4.5$ for (a) and $\eta=0.6$ for (b). The red circle in (a) and (b) marks the Nash equilibrium of each game.}
    \label{fig:kld_rl_finite_population}
\end{figure}

  \subsection{Distributed Parameter Update in Finite Populations}
  To evaluate Algorithm~\ref{algorithm:distributed_parameter_update}, we conduct simulations with the same two-population congestion games and zero-sum games in a finite population setting, where each population~$k$ consists of $N^k= 1000$ agents. Unlike the infinite population scenarios in \S\ref{sec:simulation_a}, where the closed-loop model \eqref{eq:closed_loop} was used for the simulations, we consider the scenario where individual agents use an i.i.d. Poisson process with parameter~$1$ and the KLD-RL protocol~\eqref{eq:kld_rl} for the strategy revision. The regularization weight~$\eta$ for \eqref{eq:kld_rl} is determined as in \S\ref{sec:simulation_a}. 
  
  The payoff vector $p^k(t)$ for each population~$k$ is determined by \eqref{eq:population_game_time_delay} with a fixed unit time delay, i.e., $d(t) = 1,~\forall t \geq 0$, for the congestion game, and by \eqref{eq:smoothing_pdm} with parameter $\lambda = 1$ for the zero-sum game. Neighbors of each agent are defined by a strongly connected graph generated by the Erd{\H o}s-R{\' e}nyi model, with an edge formation probability $p=0.1$, and communication between agents takes place at intervals of $T_{\text{comm}} = 0.1$. Fig.~\ref{fig:kld_rl_finite_population} illustrates the outcome of the simulations. As there are a finite number of agents, each element of the population state~$x^k(t)$ takes a value in a discrete space $\{0, 1/N^k, 2/N^k, \cdots, 1 \}$. Consequently, the population state trajectories tremble as they approach the Nash equilibrium, indicated by the red circle, in comparison with Fig.~\ref{fig:kld_rl}.

%%%%%%%%%%%%%%%%%%%%%%%%%%%%%%%%%%%%%%%%%%%%%%%%%%%%%%%%% 
\section{Conclusions} \label{sec:conclusion}
We studied a multi-agent decision problem in population games in which decision-making agents repeatedly revise their strategy choices based on time-delayed payoffs. We illustrated that under the existing logit model, the agents' strategy revision process may oscillate or converge to the perturbed Nash equilibrium. To address such limitation of existing models, we proposed the KLD-RL model and rigorously prove that the new model allows the agents to asymptotically attain the Nash equilibrium despite the time delays.

As future directions, we plan to apply the KLD-RL model in relevant engineering applications such as multi-vehicle route planning, multi-robot task allocation, and security in power systems, where time delays are inherent in underlying dynamic processes and communication systems.

%%%%%%%%%%%%%%%%%%%%%%%%%%%%%%%%%%%%%%%%%%%%%%%%%%%%%%%%% 
\section*{Acknowledgement}
The authors would like to thank the anonymous reviewers for their constructive comments that led to substantial improvements in the paper.

%%%%%%%%%%%%%%%%%%%%%%%%%%%%%%%%%%%%%%%%%%%%%%%%%%%%%%%%% 
\appendix

\change{
\subsection{Definition of Causal Mapping} \label{def:causal_mapping}
We adopt the definition of the \textit{causal mapping} from \cite[Definition~1.1.3]{10.5555/3133758} to represent the class of population games considered in this paper. Specifically, we denote by $\mathfrak G$ a mapping from a population state trajectory to a payoff vector trajectory, capturing the causal relationship between the payoff vector and the current as well as past population states.

Let $\mathcal X_e$ denote the set of $\mathbb X$-valued, time-dependent functions $x: \mathbb R_{\geq 0} \to \mathbb X$ that are square integrable over finite intervals: $\mathcal X_e = \{ x: \mathbb R_{\geq 0} \to \mathbb X \,|\, \int_0^T \|x(t) \|_2^2 \,\mathrm dt < \infty, \, \forall T > 0 \}$. Similarly, let $\mathcal P_e$ represent the set of $\mathbb R^n$-valued, time-dependent functions $p: \mathbb R_{\geq 0} \to \mathbb R^n$ that are square integrable over finite intervals.

To formalize that the payoff vector at any time $t$ depends only on the current and past population states, we define the truncations of $x \in \mathcal X_e$ and $p \in \mathcal P_e$. For $x \in \mathcal X_e$, its truncation to the interval $[0, T]$ is given by: $x_{T} (t) = \begin{cases} x(t) & \text{if } t \in [0, T) \\ 0 & \text{otherwise} \end{cases}$. Similarly, $p_{T}$ represents the truncation of $p \in \mathcal P_e$ to the interval $[0,T]$. Formally, a mapping $\mathfrak G: \mathcal X_e \to \mathcal P_e$ is called \textit{causal} if, for every $x \in \mathcal X_e$ and $T >0$, the following holds: given $p  = \mathfrak G (x)$ and $\bar p = \mathfrak G (x_{T})$, it must be true that $p_{T} (t) = \bar p_{T} (t), ~ \forall t \geq 0$.}

\subsection{Proofs of Lemmas~\ref{lemma:antipassivity_delayed_pdm} and \ref{lemma:antipassivity_smoothing_pdm}} \label{proof_lemma:antipassivity}

  \subsubsection{Proof of Lemma~\ref{lemma:antipassivity_delayed_pdm}}
  By Assumption~\ref{assumption:payoff_function}, it holds that
  \begin{align} \label{eq:boundedPayoffFunction}
    \tilde z' D \mathcal F' (z) D \mathcal F (z) \tilde z \leq B_{D \mathcal F}^2 \tilde z' \tilde z, ~ \forall z \in \mathbb X, \tilde z \in T\mathbb{X},
  \end{align}
  where $D \mathcal F$ is the differential of $\mathcal F$ and $B_{D \mathcal F}$ is a non-negative constant. Using \eqref{eq:boundedPayoffFunction}, we can derive the following relations:
  \begin{align}
    &\int_{t_0}^t \dot{x}'(\tau) \dot{p}(\tau) \,\mathrm d\tau \nonumber \\
    &=\int_{t_0}^t (1 - \dot d(\tau)) \dot{x}'(\tau)D 
      \mathcal F(x(\tau-d(\tau))) \dot{x}(\tau-d(\tau)) \,\mathrm d\tau \nonumber \\
    &\overset{(i)}{\leq} \frac{B_{D \mathcal F}}{2} \int_{t_0}^t \dot{x}'(\tau) \dot{x}(\tau) \,\mathrm d\tau \nonumber \\
    &\quad + \frac{1}{2 B_{D \mathcal F}} \int_{t_0}^t (1 - \dot d(\tau))^2 \, \dot{x}'(\tau-d(\tau))D \mathcal F'(x(\tau-d(\tau))) \nonumber \\
    & \qquad \qquad \qquad \qquad \qquad \times D \mathcal  F (x(\tau-d(\tau))) \, \dot{x}(\tau-d(\tau)) \,\mathrm d\tau \nonumber \\
    &\overset{(ii)}{\leq} \frac{B_{D \mathcal F}}{2} \!\int_{t_0}^t \dot{x}'\!(\tau) \dot{x}(\tau) \,\mathrm d\tau \nonumber \\
    &\qquad + \frac{B_{D \mathcal F} (1 + B_{\dot d})}{2} \int_{t_0-B_{d}}^{t} \dot{x}' (s)\dot{x}(s) \,\mathrm ds \nonumber \\
    &\leq \frac{B_{D \mathcal F} (2 + B_{\dot d})}{2} \, \int_{t_0}^t \dot{x}'(\tau) \dot{x}(\tau) \,\mathrm d\tau \nonumber \\
    &\qquad + \frac{B_{D \mathcal F} (1 + B_{\dot d})}{2} \int_{t_0-B_d}^{t_0} \dot{x}'(\tau)\dot{x}(\tau) \,\mathrm d\tau.
  \end{align}
  To show that $(i)$ and $(ii)$ hold, we use \eqref{eq:boundedPayoffFunction}, the change of variable $s = \tau - d(\tau)$, and the following fact:
  \begin{align*}
    &(1-\dot d(\tau)) \dot{x}'(\tau) D \mathcal F (x(\tau-d(\tau))) \dot{x}(\tau-d(\tau)) \\
    &\leq \frac{B_{D \mathcal F} }{2} \dot{x}'(\tau) \dot{x}(\tau) \\
    &\qquad +\frac{(1-\dot d(\tau))^2}{2 B_{D \mathcal F}} \dot{x}'(\tau-d(\tau)) D \mathcal F'(x(\tau-d(\tau))) \\
    &\qquad\qquad\qquad\qquad\qquad \times D \mathcal F(x(\tau-d(\tau))) \dot{x}(\tau-d(\tau)).
  \end{align*}

  Therefore, by defining $\alpha_{x,p} (t_0) = \frac{B_{D \mathcal F} (1 + B_{\dot d})}{2} \int_{t_0-B_d}^{t_0} \| \dot{x} (\tau)\|_2^2 \, \mathrm d\tau$, we conclude that  \eqref{eq:population_game_time_delay} is weakly $\delta$-antipassive with deficit $\frac{B_{D \mathcal F} (2 + B_{\dot d})}{2}$. \QED

  \subsubsection{Proof of Lemma~\ref{lemma:antipassivity_smoothing_pdm}}
  We proceed by following similar steps as in
  the proof of \cite[Proposition~7]{Park2019Payoff-Dynamic-}. Using the linear superposition principle, we express the solution of \eqref{eq:smoothing_pdm} as
  \begin{align*}
    p(t) &= \underbrace{\exp(-\lambda (t-t_0)) ( p(t_0) - F x(t_0) - b ) + F x(t_0) + b}_{=p^h(t)} \\
         &\qquad + \underbrace{\lambda \int_{t_0}^t \exp(-\lambda (t - \tau)) F \left( x(\tau) - x(t_0) \right) \,\mathrm d \tau}_{=p^f(t)}.
  \end{align*}
  Note that
  \begin{align} \label{eq:inequality_homogeneous}
    &\int_{t_0}^t \dot x'(\tau) \dot p^h (\tau) \,\mathrm d\tau \nonumber \\
    &= \lambda \int_{t_0}^t \dot x'(\tau) \exp(-\lambda(\tau - t_0)) ( F x(t_0) + b - p(t_0) ) \,\mathrm d\tau \nonumber \\
    &\leq \sqrt{n} \| F x(t_0) + b - p(t_0) \|_2.
  \end{align}
  To derive the inequality, we use the fact that the strategy revision protocol $\mathcal T_{ij}^k (x^k(t), p^k(t))$ in \eqref{eq:edm} is a probability distribution, and it holds that $0 \leq \mathcal T_{ij}^k (x^k(t), p^k(t)) \leq 1$ and $\textstyle\sum_{j=1}^{n^k} \mathcal T_{ij}^k (x^k(t), p^k(t)) = 1$. 
  Consequently, in conjunction with \eqref{eq:edm}, we can derive $|\dot x_i^k(t) | \leq 1$, and hence $\|\dot x(t)\|_2 \leq \sqrt{n}$ for all $t \geq 0$, where $n = \sum_{k=1}^M n^k$.

  Let us define $\alpha_{x,p}(t_0) = \sqrt{n} \left\| F x(t_0) + b - p(t_0) \right\|_2.$
  Now we proceed to show that 
  \begin{align} \label{eq:inequality_forced_part}
    \int_{t_0}^t \dot x'(\tau) \dot p^f (\tau) \,\mathrm d\tau \leq \int_{t_0}^t \bar \nu \dot x' (\tau) \dot x (\tau) \,\mathrm d\tau
  \end{align}
  with $\bar \nu$ given in the statement of Lemma~\ref{lemma:antipassivity_smoothing_pdm}.
  By noting that the transfer function from $\dot x$ to $\dot p^f$ is $\frac{\lambda}{\lambda + s} F$ and using Parseval's theorem, \eqref{eq:inequality_forced_part} holds if the following inequality is satisfied:
  \begin{align} \label{eq:condition_in_frequency_domain}
    z^* \!\left( \frac{\lambda}{\lambda \!+\!  j \omega} F \!+\! \frac{\lambda}{\lambda \!-\! j \omega} F' \right) z \!\leq\! 2 \bar \nu z^* z, \,\forall \omega \!\in\! \mathbb R, z \!\in\! T \mathbb C^n,
  \end{align}
  where $T \mathbb{C}^n$ is the tangent space of the $n$-dimensional complex space $\mathbb{C}^n$.
  Note that \eqref{eq:condition_in_frequency_domain} can be re-written as
  \begin{align} \label{eq:inequality_freq_domain}
    \lambda^2 z^* \!\left( F \!+\! F' \right)\! z \!-\! j \lambda \omega z^* \!\left( F \!-\! F' \right)\! z
    \leq 2 \bar \nu \!\left( \lambda^2 \!+\! \omega^2 \right)\! z^* z.
  \end{align}
  By representing $F$ as the sum of its symmetric and skew-symmetric parts, $F = F_{\tiny \text{sym}} + F_{\tiny \text{skew}}$, and the complex variable $z$ as $z = z_R + j z_I$ with $z_R, z_I \in \mathbb R^n$, we can rewrite \eqref{eq:inequality_freq_domain} into
  \begin{multline} \label{eq:inequality_freq_domain_02}
    \lambda^2 \left( z_R' F_{\text{sym}} z_R + z_I' F_\text{sym} z_I \right) - 2 \lambda \omega z_I' F_\text{skew} z_R \\
    \leq \bar \nu \left( \lambda^2 + \omega^2 \right) (z_R' z_R + z_I' z_I).
  \end{multline}

  Since $F$ is contractive, $F_{\text{sym}}$ is a negative-semidefinite matrix in $T \mathbb X$ and for the inequality \eqref{eq:inequality_freq_domain_02} to hold, it suffices to show that
  \begin{align} \label{eq:lemma:antipassivity_smoothing_pdm_01}
    - \frac{2 \lambda \omega}{\bar \nu \left( \lambda^2 + \omega^2 \right)} z_I' F_\text{skew} z_R  \leq z_R' z_R + z_I' z_I.
  \end{align}
  Note that $z_R' z_R + z_I' z_I + \frac{2 \lambda \omega}{\bar \nu ( \lambda^2 + \omega^2 )} z_I' F_\text{skew} z_R \geq z_I' ( I - \frac{1}{4{\bar \nu}^2} F_\text{skew}' F_\text{skew} ) z_I$
  Hence, if $\bar \nu$ satisfies $\bar \nu \geq \frac{1}{2} \| F_\text{skew} \|_2$ then the inequality \eqref{eq:lemma:antipassivity_smoothing_pdm_01} holds.
  In conjunction with \eqref{eq:inequality_homogeneous}, we conclude that
  \begin{align}
    \int_{t_0}^t \left( \dot x'(\tau) \dot p (\tau) - \nu \dot x'(\tau) \dot x(\tau) \right) \,\mathrm d\tau \leq  \alpha_{x, p} (t_0)
  \end{align}
  holds for any $\nu \geq \frac{1}{4} \| F - F' \|_2$ and $\alpha_{x,p} (t_0) = \sqrt{n} \| F x(t_0) + b - p(t_0) \|_2$. \QED

\subsection{Proof of Lemma~\ref{lemma:stability}} \label{proof_lemma:stability}
Recall that $\{t_l\}_{l=1}^\infty$ is the sequence of time instants at which the parameter $\theta$ of the KLD-RL EDM is updated and $\{\theta_l\}_{l=1}^\infty$ is the sequence of resulting parameters, i.e., $\theta_l = x(t_l), ~\forall l \in \mathbb N$, satisfying \eqref{eq:parameter_update_conditions}. We provide a two-part proof: In the first part, we establish that $\{\theta_l\}_{l=1}^\infty$ converges to $\mathbb{NE}(\mathcal F)$. Then, using the result from the first part, we show that the social state $x(t)$ converges to $\mathbb{NE}(\mathcal F)$.

\paragraph{Part I}
Let $x^{\text{\scriptsize NE}}$ be an arbitrary Nash equilibrium in  $\mathbb{NE} (\mathcal F)$. By Definition~\ref{def:nash}, it holds that
\begin{align} \label{eq:XAstNash}
  & ( x^{\text{\scriptsize NE}} - z )' \mathcal F ( x^{\text{\scriptsize NE}} ) \geq 0, ~ \forall z \in \mathbb X.
\end{align}
Note that \eqref{eq:parameter_update_conditions_a} suggests that $\{\theta_l\}_{l=1}^\infty$ satisfies
\begin{multline} \label{eq:InequalityResult01}
  ( \theta_{l+1} - x^{\text{\scriptsize NE}} )' \left( \mathcal F ( \theta_{l+1} ) - \eta \nabla \mathcal D ( \theta_{l+1} \,\|\, \theta_l ) \right) \\ \geq - \frac{\eta}{2} \mathcal D ( \theta_{l+1} \,\|\, \theta_l ).
\end{multline}
Since $\mathcal F$ is a contractive population game under both conditions \ref{stability_contractive_games}) or \ref{stability_strictly_contractive_games}) stated in the lemma, using \eqref{eq:contractive} and \eqref{eq:XAstNash}, it holds that
\begin{multline} \label{eq:InequalityResult02}
  ( \theta_{l+1} \!-\! x^{\text{\scriptsize NE}} )' \mathcal F ( \theta_{l+1} ) \!\leq\! ( \theta_{l+1} \!-\! x^{\text{\scriptsize NE}} )'  \mathcal F ( x^{\text{\scriptsize NE}} ) \!\leq\! 0.
\end{multline}
Also, using \eqref{eq:kl_divergence} and \eqref{eq:gradient_kld}, we can establish
\begin{align} \label{eq:kld_at_nash_equilibrium}
  ( x^{\text{\scriptsize NE}} )' \nabla \mathcal D ( \theta_{l+1} \| \theta_l ) = \mathcal D ( x^{\text{\scriptsize NE}} \| \theta_l ) - \mathcal D ( x^{\text{\scriptsize NE}} \| \theta_{l+1} ).
\end{align}
Using \eqref{eq:InequalityResult01}-\eqref{eq:kld_at_nash_equilibrium}, we can derive
\begin{align*}
  0 &\geq ( \theta_{l+1} - x^{\text{\scriptsize NE}} )' \nabla \mathcal D ( \theta_{l+1} \,\|\, \theta_l ) - \frac{1}{2} \mathcal D ( \theta_{l+1} \| \theta_l ) \nonumber \\
    &= \mathcal D ( \theta_{l+1} \| \theta_l ) \!-\! \mathcal D ( x^{\text{\scriptsize NE}} \| \theta_l ) \!+\! \mathcal D ( x^{\text{\scriptsize NE}} \| \theta_{l+1} ) \!-\! \frac{1}{2} \mathcal D ( \theta_{l+1} \| \theta_l )
\end{align*}
and hence we conclude
\begin{align} \label{eq:InequalityResult04}
  \mathcal D ( x^{\text{\scriptsize NE}} \| \theta_{l+1} ) \leq \mathcal D ( x^{\text{\scriptsize NE}} \| \theta_l ) - \frac{1}{2} \mathcal D ( \theta_{l+1} \| \theta_l ).
\end{align}

From \eqref{eq:InequalityResult04}, we can infer that $\{ \mathcal D ( x^{\text{\scriptsize NE}} \| \theta_l ) \}_{l=1}^\infty$ is a monotonically decreasing sequence, and since every $\mathcal D ( x^{\text{\scriptsize NE}} \| \theta_l)$ is nonnegative, it holds that 
$\lim_{l \to \infty} \mathcal D ( \theta_{l+1} \| \theta_l ) = 0$.
Therefore, according to \eqref{eq:parameter_update_conditions_a}, we have that 
\begin{align} \label{eq:convergence_condition}
  \lim_{l \to \infty} \max_{z \in \mathbb X} ( z - \theta_{l+1} )' ( \mathcal F(\theta_{l+1}) - \eta \nabla \mathcal D ( \theta_{l+1} \| \theta_l ) ) = 0.
\end{align}
Let $\theta_l = (\theta_{l}^1, \cdots, \theta_{l}^{M}) \in \mathbb X$ and $\theta_l^k = (\theta_{l,1}^k, \cdots, \theta_{l,n^k}^k) \in \mathbb X^k$. Since $\mathcal D ( \theta_{l+1} \| \theta_l )$ converges to zero as $l$ tends to infinity, $\lim_{l \to \infty} \theta_{l,i}^k = 0$ implies $\lim_{l \to \infty} \theta_{l+1,i}^k = 0$. Also if $\theta_l$ converges to $\bar \theta$, then so does $\theta_{l+1}$.

To complete the proof of Part~I, it is sufficient to show that when a subsequence $\{\theta_{l_m}\}_{m=1}^\infty$ converges to $\bar \theta \in \mathbb X$,
it holds that $\bar \theta \in \mathbb{NE}(\mathcal F)$. From \eqref{eq:convergence_condition}, we can derive
\begin{align} \label{eq:inequality_over_support_subset}
  &\max_{z \in \mathbb X_{\bar \theta}} ( z - \bar \theta )' \mathcal F(\bar \theta) \nonumber \\
  &= \!\lim_{m \to \infty} \max_{z \in \mathbb X_{\bar \theta}} ( z \!-\! \theta_{l_m\!+\!1} )' \!( \mathcal F(\theta_{l_m\!+\!1}) \!-\! \eta \nabla \mathcal D ( \theta_{l_m\!+\!1} \| \theta_{l_m} ) ) \nonumber \\
  &\leq \!\lim_{m \to \infty} \max_{z \in \mathbb X} ( z \!-\! \theta_{l_m\!+\!1} )' \!( \mathcal F(\theta_{l_m\!+\!1}) \!-\! \eta \nabla \mathcal D ( \theta_{l_m\!+\!1} \| \theta_{l_m} ) ) \!=\! 0,
\end{align}
where $\mathbb X_{\bar \theta}$ is a subset of $\mathbb X$ given by $\mathbb X_{\bar \theta} = \{x \in \mathbb X \,|\, x_i^k = 0 \text{ if } \bar \theta_i^k = 0\}$.
Note that every Nash equilibrium $x^{\text{NE}}$ of $\mathcal F$ belongs to $\mathbb X_{\bar \theta}$; otherwise there is $x^{\text{NE}} \in \mathbb {NE}(\mathcal F)$ for which $\lim_{m \to \infty} \mathcal D(x^{\text{NE}} \| \theta_{l_m}) = \infty$, which contradicts the fact that $\{\mathcal D(x^{\text{NE}} \| \theta_{l})\}_{l=1}^\infty$ is a decreasing sequence. Then, the following inequalities hold:
\begin{subequations} \label{eq:inequalities_for_limit_points}
  \begin{align} 
    ( x^{\text{NE}} - \bar \theta )' \mathcal F(\bar \theta) \leq 0 \\
    ( x^{\text{NE}} - \bar \theta )' \mathcal F(x^{\text{NE}}) \geq 0.
  \end{align}
\end{subequations}

If the condition \ref{stability_contractive_games}) of the lemma holds, then $\mathbb X_{\bar \theta} = \mathbb X$ and, hence, by \eqref{eq:inequality_over_support_subset}, $\bar \theta$ is a Nash equilibrium. In addition, if the condition \ref{stability_strictly_contractive_games}) holds, then using \eqref{eq:inequalities_for_limit_points} and since $\mathcal F$ is strictly contractive, we can establish $( x^{\text{NE}} - \bar \theta )' (\mathcal F(x^{\text{NE}}) - \mathcal F(\bar \theta)) = 0$ which implies $\bar \theta = x^{\text{NE}}$. Hence, every limit point $\bar \theta$ of $\{\theta_l\}_{l=1}^\infty$ belongs to $\mathbb{NE} (\mathcal F)$. Therefore, we conclude that the sequence $\left\{ \theta_l \right\}_{l=1}^\infty$ converges to the Nash equilibrium set $\mathbb{NE} (\mathcal F)$.

\paragraph{Part II}

Let $\alpha_{x,p}$ be the upper bound of the inequality \eqref{eq:weakAntipassivity}. By \eqref{eq:parameter_update_conditions_b}, it holds that $\lim_{l \to \infty} \alpha_{x,p}(t_l) = 0$. Also, let $\mathcal S_{\theta_l}(x(t), p(t))$ be the informative $\delta$-storage function of the KLD-RL EDM, defined as in \eqref{eq:DeltaStorageFunc}, over the time interval $t \in [t_l, t_{l+1})$. Using \eqref{eq:weakAntipassivity} and \eqref{eq:delta_passivity}, we can derive
\begin{equation} \label{eq:delta_storage_function_sequence}
  \mathcal S_{\theta_l} ( x(t), p(t) ) \!\leq\! \mathcal S_{\theta_l} ( x(t_l), p(t_l) ) \!+\! \alpha_{x,p} (t_l), ~ \forall t \!\in\! [t_l, t_{l+1}).
\end{equation}
Given that $\theta_l = x(t_l)$, the term $\mathcal S_{\theta_l} ( x(t_l), p(t_l) )$ can be rewritten as
\begin{align}
  &\mathcal S_{\theta_l} ( x(t_l), p(t_l) ) \nonumber \\
  &= \max_{\bar z \in \mathrm{int}(\mathbb X)} (\bar z' p(t_l)  - \eta \mathcal D ( \bar z \,\|\, \theta_l ) ) - \theta_l' p(t_l) \nonumber \\
  &= \eta \textstyle \sum_{k=1}^M \ln \left( \textstyle \sum_{s=1}^{n^k} \theta_{l,s}^k \exp(\eta^{-1} p_s^k(t_l)) \right) - \theta_l' p(t_l),
\end{align}
where we use \eqref{eq:maximization} and \eqref{eq:kld_rl} to derive the last equality. By \eqref{eq:parameter_update_conditions_b} and \eqref{eq:inequality_over_support_subset}, we observe that for any convergent subsequence $\{\theta_{l_m}\}_{m=1}^\infty$ of $\{\theta_{l}\}_{l=1}^\infty$ with its limit point $\bar \theta$ in $\mathbb{NE}(\mathcal F)$, it holds that $\lim_{m \to \infty} \ln ( \textstyle \sum_{s=1}^{n^k} \theta_{l_m,s}^k \exp(\eta^{-1} p_s^k(t_{l_m})) ) = \eta^{-1} \max_{j \in \{1 \leq s \leq n^k \,|\, \bar \theta_s^k > 0\}} \mathcal F_j^k(\bar \theta)$,
from which we can establish that
\begin{align} \label{eq:convergence_S_theta_l}
  \lim_{l \to \infty} \mathcal S_{\theta_l} ( x(t_l), p(t_l) ) = 0.
\end{align}
Consequently, for any $\epsilon > 0$, we can find a positive integer $L$ for which
\begin{align} \label{eq:storage_function_convergence}
  \mathcal S_{\theta_l} ( x(t), p(t) ) \leq \epsilon, ~ \forall t \in [t_l, t_{l+1})
\end{align}
holds for all $l \geq L$.

Note that $\mathcal D(z \,\|\, \theta_l)$ is a strongly convex function of $z$, i.e., $\nabla^2 \mathcal D(z \,\|\, \theta_l) \geq I$. According to the analysis used in \cite[Theorem~2.1]{10.2307/3081987}, 
\begin{align*}
  \eta \tilde z' \nabla \mathcal D(y \,\|\, \theta_l) = \tilde z'r \iff y = \argmax_{z \in \mathrm{int} ( \mathbb X )} ( z'r - \eta \mathcal D(z \,\|\, \theta_l) )
\end{align*}
holds for all $r \in \mathbb R^n$, $y \in \mathrm{int}(\mathbb X)$, and $\tilde z \in T\mathbb X$. Let $y(t) = \argmax_{z \in \mathrm{int} ( \mathbb X )} ( z' p(t) - \eta \mathcal D(z \,\|\, \theta_l) )$, then we can derive
\begin{multline} \label{eq:storage_function_strong_convexity}
  \mathcal S_{\theta_l} (x(t), p(t)) = \eta (y(t) - x(t))' \nabla \mathcal D (y(t) \,\|\, \theta_l) \\
  - \eta (\mathcal D (y(t) \,\|\, \theta_l) - \mathcal D (x(t) \,\|\, \theta_l)) \geq \frac{\eta}{2} \|y(t) - x(t)\|_2^2,
\end{multline}
where to establish the inequality, we use the strong convexity of $\mathcal D(z \,\|\, \theta_l)$.
In conjunction with \eqref{eq:storage_function_convergence} and \eqref{eq:storage_function_strong_convexity}, by the definition of the KLD-RL protocol \eqref{eq:maximization}, we conclude that $\lim_{t \to \infty} \|\dot x(t)\|_2 = 0$.

We complete the proof by showing that the social state $x(t)$ converges to $\mathbb{NE} (\mathcal F)$. By contradiction, suppose that there is a sequence $\{\tau_m\}_{m=1}^\infty$ of time instants for which $\left\{ x(\tau_m) \right\}_{m=1}^\infty$ converges to $\bar x$, where $\bar x$ does not belong to $\mathbb{NE} (\mathcal F)$. Let $\{[t_{l_m}, t_{l_m+1})\}_{m=1}^\infty \subseteq \{[t_l, t_{l+1})\}_{l=1}^\infty$ be the collection of time intervals for which $\tau_m \in [t_{l_m}, t_{l_m+1})$ holds for all $m \in \mathbb N$. By Assumption~\ref{assumption:pdm}-1,
\eqref{eq:delta_storage_function_sequence}, and \eqref{eq:convergence_S_theta_l}, we have that
\begin{align} \label{eq:limit_of_sequence}
  \max_{\bar z \in \mathbb X} \left( \bar z' \mathcal F ( \bar x ) \!-\! \eta \mathcal D ( \bar z \| \bar \theta ) \right)
  \!-\! \left( \bar x' \mathcal F(\bar x) \!-\! \eta \mathcal D ( \bar x \| \bar \theta ) \right) \!=\! 0,
\end{align}
where $\bar \theta$ is a limit point of the subsequence $\left\{\theta_{l_m} \right\}_{m=1} ^\infty$ of $\left\{\theta_l \right\}_{l=1} ^\infty$.
By \eqref{eq:limit_of_sequence} and the facts that $\bar \theta$ belongs to $\mathbb{NE} (\mathcal F)$ and $\mathcal F$ is contractive, we can derive
\begin{align*}
  \eta \mathcal D ( \bar x \,\|\, \bar \theta ) \leq (\bar x - \bar \theta)' \mathcal F(\bar x) \leq (\bar x - \bar \theta)' \mathcal F(\bar \theta) \leq 0, 
\end{align*}
which yields that $\bar x = \bar \theta$.
This contradicts our hypothesis that $\bar x$ does not belong to $\mathbb{NE} (\mathcal F)$. \QED

\subsection{Proofs of Theorems~\ref{theorem:stability_time_delay} and \ref{theorem:stability_smoothing_pdm}} \label{proof_theorem:stability}

\subsubsection{Proof of Theorem~\ref{theorem:stability_time_delay}}
According to Lemmas~\ref{lemma:antipassivity_delayed_pdm}-\ref{lemma:stability}, it suffices to show that \eqref{eq:bias_selection_criteria_1} of Algorithm~\ref{algorithm:parameter_update} implies \eqref{eq:parameter_update_conditions}.
Let $\{\theta_l\}_{l=1}^\infty$ be a sequence of parameter updates determined by \eqref{eq:bias_selection_criteria_1} of Algorithm~\ref{algorithm:parameter_update}. Note that
\begin{align} \label{eq:parameter_update_conditions_a_time_delay_case}
  &\max_{z \in \mathbb X} ( z - \theta_{l+1} )' ( \mathcal F(\theta_{l+1})
    - \eta \nabla \mathcal D ( \theta_{l+1} \| \theta_l ) ) \nonumber \\
  &\leq \max_{z \in \mathbb X} ( z - \theta_{l+1} )' ( p(t_{l+1})) - \eta \nabla \mathcal D ( \theta_{l+1} \| \theta_l ) )  \nonumber \\
  &\qquad + \sqrt{2M} \| p(t_{l+1}) - \mathcal F(\theta_{l+1}) \|_2 \,.
\end{align}
We can establish a bound on $\| p(t_{l+1}) - \mathcal F(\theta_{l+1}) \|_2$ as follows:
\begin{align} \label{eq:bound_on_payoff_difference}
  &\| p(t_{l+1}) - \mathcal F(\theta_{l+1}) \|_2 \nonumber \\
  &= \| \mathcal F(x(t_{l+1} - d(t_{l+1}))) - \mathcal F(x(t_{l+1})) \|_2 \nonumber \\
  &\leq \max_{z \in \mathbb X} \| D \mathcal F (z) \|_2 \| x(t_{l+1} - d(t_{l+1})) - x(t_{l+1}) \|_2 \nonumber \\
  &\leq B_{D\mathcal F} B_d \max_{\tau \in [t_{l+1} - B_d, t_{l+1}]} \|\dot x (\tau)\|_2
\end{align}
where $B_d$ upper bounds $d(t)$ in \eqref{eq:population_game_time_delay} and $B_{D\mathcal F}$   upper bounds  $D \mathcal F$.
Hence, by \eqref{eq:parameter_update_conditions_a_time_delay_case} and \eqref{eq:bound_on_payoff_difference}, \eqref{eq:bias_selection_criteria_1} implies \eqref{eq:parameter_update_conditions_a}. To establish \eqref{eq:parameter_update_conditions_b}, suppose $\lim_{l \to \infty} \mathcal D (\theta_{l+1} \| \theta_l) = 0$. Using \eqref{eq:bias_selection_criteria_1}, we can derive $\lim_{l \to \infty} \max_{\tau \in [t_{l+1} - B_d, t_{l+1}]} \| \dot x(\tau) \|_2 = 0$ from which we conclude that
\begin{align*}
  &\lim_{l \to \infty} \| p(t_l) - \mathcal F(x(t_l)) \|_2 = 0 \\
  &\lim_{l \to \infty} \alpha_{x, p} (t_l) = \lim_{l \to \infty} \frac{B_{D\mathcal F} (1 + B_{\dot d})}{2} \int_{t_l-B_d}^{t_l} \left\| \dot{x} (\tau) \right\|_2 \, \mathrm d\tau = 0.
\end{align*}
This completes the proof. \QED

\subsubsection{Proof of Theorem~\ref{theorem:stability_smoothing_pdm}}
Similar to the proof of Theorem~\ref{theorem:stability_time_delay}, we show that \eqref{eq:bias_selection_criteria_2} of Algorithm~\ref{algorithm:parameter_update} implies \eqref{eq:parameter_update_conditions}.
We begin with establishing a bound on $\| p(t_{l+1}) - \mathcal F(x(t_{l+1})) \|_2$. Consider the following differential equation:
\begin{align*}
  \frac{\mathrm d}{\mathrm dt} ( p(t) - \mathcal F ( x(t) ) ) = - \lambda ( p(t) - \mathcal F ( x(t) ) ) - D \mathcal F ( x(t) ) \dot x(t).
\end{align*}
For any $t_0 < t_{l+1}$, a solution to the equation at $t = t_{l+1}$ can be computed as 
\begin{multline*}
  p(t_{l+1}) - \mathcal F ( x(t_{l+1}) ) 
  \!=\! ( p (t_{0}) - \mathcal F ( x(t_{0}) ) ) \exp (-\lambda (t_{l+1} - t_{0} )) \\
  - \int_{t_{0}}^{t_{l+1}} \exp({-\lambda (t_{l+1} - \tau )}) D \mathcal F ( x(\tau) ) \dot x(\tau) \, \mathrm d\tau.
\end{multline*}
By assigning $t_0 = \gamma t_{l+1}$ with $\gamma \in (0,1)$, we obtain
\begin{align} \label{eq:bound_on_payoff_difference_smoothing_pdm}
  &\| p(t_{l+1}) - \mathcal F ( x(t_{l+1}) ) \|_2 \nonumber \\
  &\leq ( \| p(\gamma t_{l+1}) \|_2 + B_{\mathcal F} ) \exp(-\lambda (1 - \gamma ) t_{l+1}) \nonumber \\
  &\qquad + B_{D \mathcal F} \int_{\gamma t_{l+1}}^{t_{l+1}} \exp({-\lambda (t_{l+1} - \tau)}) \| \dot x(\tau) \|_2 \, \mathrm d\tau.
\end{align}
Hence, in conjunction with \eqref{eq:parameter_update_conditions_a_time_delay_case} and \eqref{eq:bound_on_payoff_difference_smoothing_pdm}, we observe that
\eqref{eq:bias_selection_criteria_2} implies \eqref{eq:parameter_update_conditions_a}. To establish \eqref{eq:parameter_update_conditions_b}, using \eqref{eq:bias_selection_criteria_2} and \eqref{eq:bound_on_payoff_difference_smoothing_pdm}, if $\lim_{l \to \infty} \mathcal D (\theta_{l+1} \| \theta_l) = 0$ then it holds that $\lim_{l \to \infty} \left\| p(t_l) - \mathcal F \left( x(t_l) \right) \right\|_2 = 0,$
which implies $\lim_{l \to \infty} \alpha_{x,p} (t_l) = \lim_{l \to \infty} \sqrt{n} \left\| p(t_l) - F x(t_l) - b \right\|_2 = 0,$
where we use $\mathcal F(x) = Fx + b$.
This completes the proof.
\QED

%%%%%%%%%%%%%%%%%%%%%%%%%%%%%%%%%%%%%%%%%%%%%%%%%%%%%%%%% 
\balance
\bibliographystyle{IEEEtran}
\bibliography{IEEEabrv,references}

\end{document}